\documentclass{article}

\usepackage[table,xcdraw]{xcolor}
\usepackage{microtype}
\usepackage{multirow}       
\usepackage{graphicx}
\usepackage{subcaption}

\usepackage{wrapfig}
\usepackage{booktabs}      
\usepackage{makecell}       
\usepackage{caption}       
\usepackage{placeins}  
\usepackage{float}
\usepackage{soul} 
\usepackage{color}
\usepackage{makecell}

\usepackage{enumitem}

\usepackage{hyperref}

\usepackage[accepted]{icml2026}

\usepackage{amsmath}
\usepackage{amssymb}
\usepackage{mathtools}
\usepackage{amsthm}

\usepackage[capitalize,noabbrev]{cleveref}

\theoremstyle{plain}
\newtheorem{theorem}{Theorem}[section]

\theoremstyle{definition}

\theoremstyle{remark}

\usepackage[textsize=tiny]{todonotes}

\icmltitlerunning{TwinQuant: Learnable Subspace Decomposition for 4-Bit LLM Quantization}

\begin{document}

\twocolumn[
\icmltitle{TwinQuant: Learnable Subspace Decomposition for 4-Bit LLM Quantization}

\begin{icmlauthorlist}
\icmlauthor{Haodong Wang}{hkust}
\icmlauthor{Junjie Liu}{sysu}
\icmlauthor{Zicong Hong}{hkust}
\icmlauthor{Qianli Liu}{hkust}
\icmlauthor{Jian Lin}{hkust}
\icmlauthor{Song Guo}{hkust}
\icmlauthor{Xu Chen}{sysu}
\end{icmlauthorlist}

\icmlaffiliation{hkust}{Department of Computer Science and Engineering, The Hong Kong University of Science and Technology, Hong Kong, China}
\icmlaffiliation{sysu}{School of Computer Science and Engineering, Sun Yat-sen University, Guangzhou, China}

\icmlcorrespondingauthor{Zicong Hong}{ziconghong@gmail.com}
\icmlcorrespondingauthor{Song Guo}{songguo@ust.hk}

\icmlkeywords{Machine Learning, ICML}

\vskip 0.3in
]

\printAffiliationsAndNotice{}

\begin{abstract}
4-bit quantization reduces the memory footprint and latency of large language model inference, but its aggressive precision reduction can severely degrade accuracy.
Prior methods address this by decomposing each weight matrix into two components (e.g., via singular value decomposition) and quantizing them separately, assigning the bulk of values to a low-precision residual component while handling outliers with a high-precision low-rank component.
However, such decompositions are designed to minimize the real-valued energy of the residual, rather than the post-quantization error of the residual and low-rank components.
We propose TwinQuant, a 4-bit quantization framework that learns quantization-friendly decomposed subspaces and jointly reshapes both the low-rank and residual components.
TwinQuant learns component-specific transformations via a joint optimization over the Stiefel and general linear manifolds, flattening their distributions and reducing dynamic-range imbalance.
To enable efficient end-to-end execution, we further design a fused dual-component kernel that pipelines the two-stage low-rank computation on-chip and merges both components with a single epilogue, avoiding intermediate global-memory traffic.
Across LLaMA3 and Qwen3 models, TwinQuant preserves near-FP16 accuracy and delivers up to $1.8\times$ end-to-end speedup over an FP16 baseline.
\end{abstract}

\section{Introduction}
Large language models (LLMs) have become core infrastructure for conversational assistants, medical decision support, and code generation~\citep{llama3,qwen3,stcast,team2023gemini,Thirunavukarasu2023LLMInMedicine,10.1145/3586030}, driven largely by aggressive scaling from 7B to 70B and even beyond 600B parameters~\citep{liu2024deepseek}. However, this growth makes inference increasingly dominated by large matrix multiplications, amplifying both compute and memory-traffic costs and creating a key deployment bottleneck. 

To relieve this pressure, modern accelerators are rapidly strengthening native support for low-precision arithmetic (e.g., FP8/FP4 on NVIDIA Blackwell)~\citep{nvidia_blackwell}. Meanwhile, post-training quantization (PTQ)~\citep{quantization_quarot,quantizaion_qserve,atom_quantizaiton} compresses weights and runtime activations into low-bit formats, reducing memory footprint and improving throughput, making quantization a key technique for efficient LLM serving.

However, aggressively quantizing both weights and activations to low-bit often leads to large accuracy drops because their heavy-tailed and highly anisotropic statistics couple heterogeneous components within the same tensors, causing the quantization error to be dominated by a small subset of directions and magnitudes~\citep{D2MoE_CPU_offload}. 
Early approaches apply scaling and matrix transformations~\citep{quantization_quarot, spinquant_quant, flatquant} to re-distribute weights and activations before quantization,
but they mainly re-express tensors in a different coordinate system without removing their intrinsic anisotropy: a few dominant directions still dictate the dynamic range, forcing all groups to share limited 4-bit scales and leaving substantial distortion that redistribution alone cannot eliminate.

More recent methods are to move beyond a single space and decompose each weight into two components~\citep{li2024svdquant,low_rank}.
Representative singular value decomposition (SVD)-based schemes~\citep{li2024svdquant}, where a truncated SVD can absorb outliers into a small high-precision low-rank branch while quantizing the residual. This is effective when singular values decay quickly, so a tiny rank can preserve most outliers in diffusion models. However, we empirically observe much slower rank decay in the mainstream LLMs (as discussed in \autoref{motivation}), so tiny rank cannot absorb enough outliers.
This creates a dilemma: increasing rank adds substantial memory and compute overhead, while a small rank leaves a large residual with sharp 4-bit accuracy loss.

To break this trade-off, we propose \textbf{TwinQuant}, a dual-branch 4-bit quantization framework for LLMs built on a \emph{learnable subspace decomposition} that learns how to split each weight into two complementary components under a  limited rank budget, with both components being quantization-friendly. By learning the decomposition, TwinQuant reshapes the statistics of each component to be more amenable to low-bit quantizers (e.g., reducing dynamic range and directional skew) so that 4-bit distortion is lowered without increasing the truncation rank or introducing substantial extra compute and memory overhead.

A naive strategy is to perform an SVD decomposition and then train the quantization parameters (e.g., quantization scales and clipping thresholds) for the low-rank and residual components. However, this faces two practical issues: (i) tuning these parameters across all layers and both components introduces non-trivial training overhead, often approaching the cost of full quantization-aware training; (ii) the two components are strongly coupled: the residual is defined by what the low-rank reconstruction does not capture, so updating the quantization of one component shifts the effective distribution seen by the other, making joint calibration unstable and difficult to coordinate.

Inspired by rotation-based quantization~\citep{spinquant_quant,quantization_quarot} to reshape tensor statistics for low-bit quantization without heavy training overhead, TwinQuant uses parameter-efficient global and layer-specific transforms to adaptively reallocate the subspaces of the low-rank and residual components, directly minimizing quantization error. In addition, these transforms can be seamlessly folded into the weights offline, avoiding additional overhead at inference time. To further reduce the runtime cost of the two-component computation, we design a specialized kernel that jointly executes both components while minimizing memory-access overhead.

We summarize our contributions as follows:
\begin{itemize}
\item We propose \textbf{TwinQuant}, a 4-bit quantization paradigm for LLMs that learns quantization-friendly decomposed subspaces and efficiently reallocates information between the low-rank and residual components.
\item We introduce a \textbf{joint optimization scheme} that over the Stiefel and general linear manifolds to learn better component-specific transformations for two components to minimize end-to-end quantization error.
\item We design a \textbf{fused dual-component kernel} that pipelines the two-stage low-rank computation on-chip and merges both components in a single epilogue, reducing kernel launches and avoiding intermediate global memory traffic.
\item We conduct extensive experiments on mainstream LLMs and system-level evaluations on both workstation and server-class GPUs, demonstrating that TwinQuant preserves near-FP16 accuracy while achieving up to \textbf{$1.8\times$} speedup.
\end{itemize}

\section{Background \& Related Work}\label{preliminary}
Quantization is a widely used technique to accelerate the linear operators in Transformer layers by reducing their arithmetic precision.  
For a tensor $\mathbf{X}$, symmetric quantization is defined as
\begin{equation}\label{equation 1}
  \mathbf{Q}_{\mathbf{X}} = \mathrm{round}\!\left(\frac{\mathbf{X}}{s_{\mathbf{X}}}\right),
  \quad
  s_{\mathbf{X}} = \frac{\max(|\mathbf{X}|)}{q_{\max}}, 
\end{equation}
where $\mathbf{Q}_{\mathbf{X}}$ is the integer representation of $\mathbf{X}$, $s_{\mathbf{X}}$ is the scaling factor, and $q_{\max}$ is the maximum representable integer. 
For signed integers with $k$ bits, $q_{\max} = 2^{k-1} - 1$. 
The dequantized tensor is then given by $\mathcal{Q}(\mathbf{X}) = s_{\mathbf{X}}\,\mathbf{Q}_{\mathbf{X}}$. For a linear layer with activation $\mathbf{X}\in\mathbb{R}^{b\times m}$ and weight $\mathbf{W}\in\mathbb{R}^{m\times n}$,
\begin{equation}
  \mathbf{X}\mathbf{W} \approx \mathcal{Q}(\mathbf{X})\,\mathcal{Q}(\mathbf{W})
  = s_\mathbf{X} s_\mathbf{W} \cdot \mathbf{Q}_\mathbf{X} \mathbf{Q}_\mathbf{W}.
\end{equation}
To exploit low-precision compute units on modern GPUs, $\mathbf{Q}_{\mathbf{X}}$ and $\mathbf{Q}_{\mathbf{W}}$ usually use the same bit-width; otherwise, one must be dequantized on the fly to match the other, which undermines quantization speedups. We also define the quantization error as $\mathbf{E}_{\mathbf{X}} = \mathbf{X} - \mathcal{Q}(\mathbf{X})$.

\noindent\textbf{Per-channel Scaling.}\quad
The input activations are often rich in outliers. To mitigate their impact on quantization, a popular way is to apply channel-wise scaling over weights and activations
\begin{equation*}
    \hat{\mathbf{X}}\hat{\mathbf{W}} = \left(\mathbf{X}\,\mathrm{diag}(\lambda)^{-1}\right)\cdot \left(\mathrm{diag}(\lambda)\mathbf{W}\right),
\end{equation*}
where $\lambda$ is the channel-wise scaling factor which jointly calibrates the magnitudes of input activations and weights and can be folded into the model weights offline, eliminating extra runtime computation. SmoothQuant~\citep{quant_smoothquant} selects the scaling factors via grid search to balance the activation–weight trade-off, while FlatQuant~\citep{flatquant} learns the scaling factors directly as trainable parameters.

\noindent\textbf{Low-Rank Decomposition.}\quad
Recent works show that low-rank decomposition can effectively reduce the errors introduced by low-bit quantization in diffusion models~\citep{li2024svdquant,low_rank}.  SVDQuant~\citep{li2024svdquant} 
applies SVD to the scaled weights, absorbing most of the outlier components into a low-rank component with controlled overhead, while quantizing the remaining residual component to low precision:
\begin{equation}\label{lowrank_equ}
    \hat{\mathbf{Y}} = \hat{\mathbf{X}}\hat{\mathbf{W}} \approx \mathcal{Q}(\hat{\mathbf{X}})\mathbf{U}\textbf{V} + \mathcal{Q}(\hat{\mathbf{X}})\mathcal{Q}(\textbf{R}),
\end{equation}
where $\mathbf{R}\in\mathbb{R}^{m\times n}$ is the residual component and  $\mathbf{U}\in\mathbb{R}^{m\times r}$, $\mathbf{V}\in\mathbb{R}^{r\times n}$ is the low-rank branch and obtained by SVD.
Empirically, $r \ll \min(m,n)$, thus the additional parameters and computation for the low-rank component are negligible, contributing only $\frac{mr+nr}{mn}$ to the overall costs.
However, our goal is to further quantize the low-rank component and learn a quantization-friendly subspace decomposition, rather than relying on a fixed offline SVD.

\noindent\textbf{Parameter-Efficient Learnable Quantization.}\quad
Related work on learnable-parameter LLM quantization keeps pretrained weights fixed but learns a small set of calibration parameters to minimize the quantization error~\citep{lrquant, flatquant}. For example, OmniQuant~\citep{omniquant_quant} learns layer-wise clipping thresholds via short calibration to reduce low-bit distortion, while SpinQuant~\citep{spinquant_quant} learns rotation matrices that reshape tensor statistics for quantization and can be folded into weights offline.
However, unlike these methods that directly calibrate a single quantized weight space, our work learns a decomposed subspace structure tailored to 4-bit quantization.

\begin{figure}[t]
  \centering
  \begin{subfigure}[t]{0.98\linewidth}
    \centering
    \includegraphics[width=\linewidth]{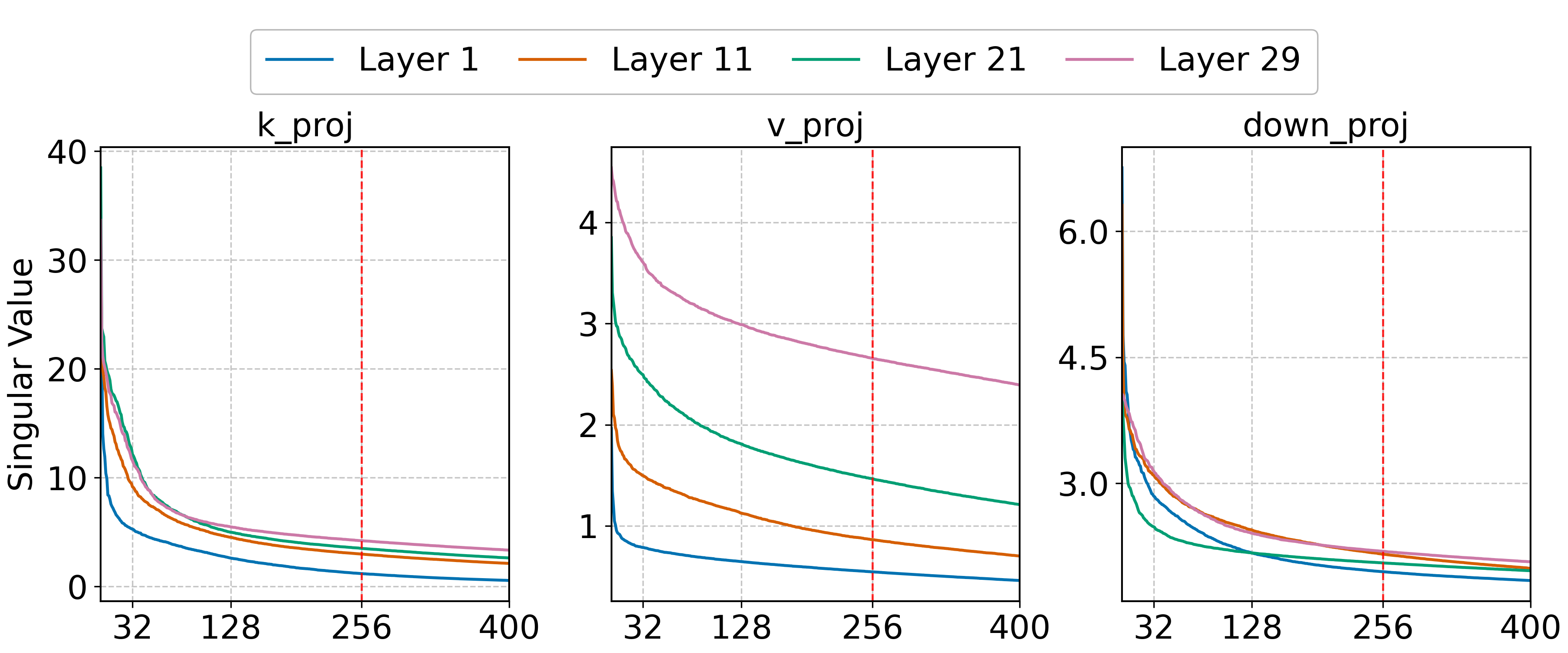}
    \caption{First 400 Singular-Value of Selected Linear Layers.}
    \label{proj_svd}
  \end{subfigure}\hfill
  \begin{subfigure}[t]{0.98\linewidth}
    \centering
    \includegraphics[width=\linewidth]{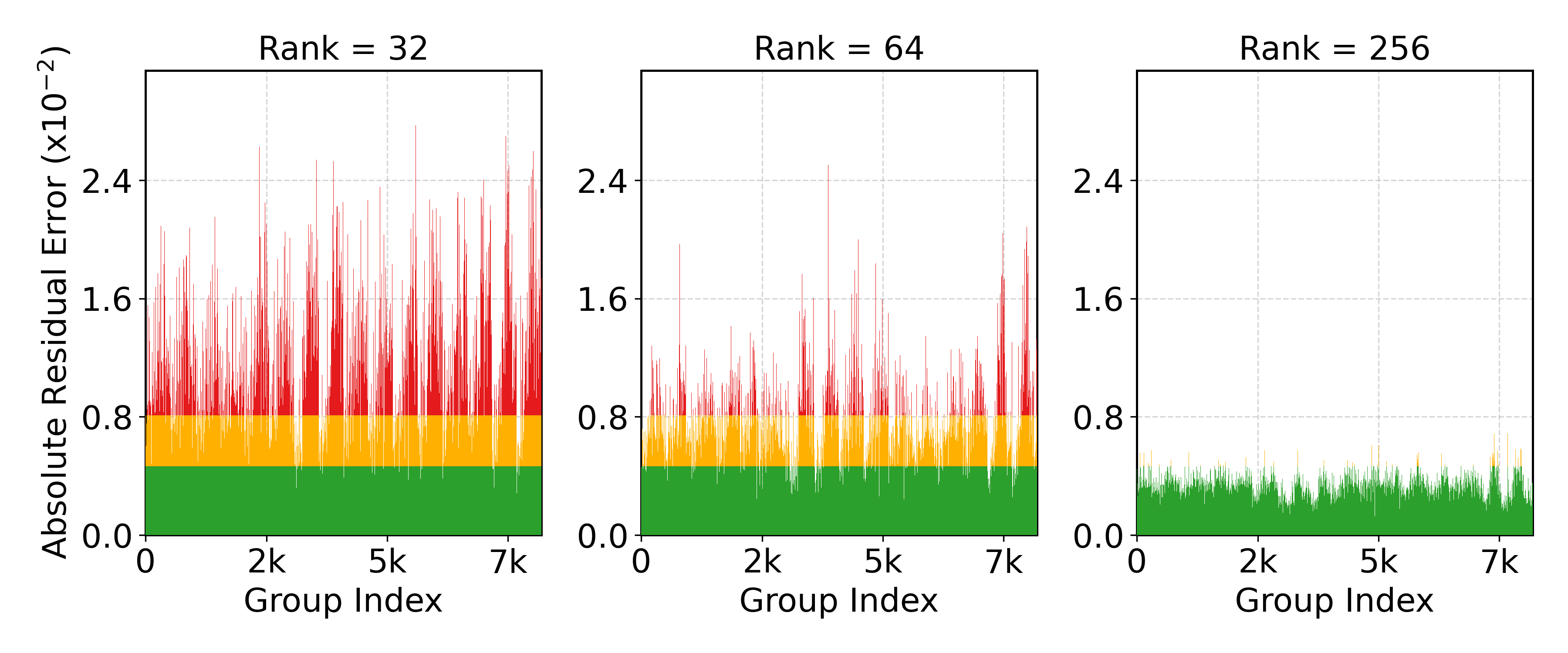}
    \caption{Quantization Error Distributions of $\textbf{R}_\mathrm{k}$ in 29$^{th}$ layer across different rank numbers. Red denotes large errors, while green denotes small ones.}
    \label{err_k_30}
  \end{subfigure}
  \caption{LLM weights exhibit slow spectral decay, leading to rank-sensitive residual quantization error.}
  \label{fig:two_side_by_side}
\end{figure}

\section{Motivation}\label{motivation}

\paragraph{Observation 1: LLM Weights Exhibit Slow Singular-Value Decay.}
The effectiveness of SVDQuant~\citep{li2024svdquant} relies on a key assumption: the singular-value spectra decay rapidly, so a small low-rank component (e.g. rank=32) can absorb outliers and leave the residual component easier to quantize. This assumption holds well for diffusion models, but we find that it does not directly transfer to mainstream LLMs. As shown in \autoref{proj_svd}, the singular-value spectra of LLaMA3-8B decay much more slowly and remain relatively flat even beyond $r\!\approx\!256$. This indicates that many directions still carry non-negligible energy, making the small low-rank setting insufficient for LLM weight decomposition.

\paragraph{Observation 2: Slow Spectral Decay Creates a Rank--Overhead Dilemma.}
This slow singular-value decay directly affects residual quantization.  \autoref{err_k_30} shows the group-wise quantization error distribution of $\mathbf{R}_k$ at the 29$^{th}$ layer: rank=32 yields substantial residual errors, while rank=256 significantly reduces them, suggesting that higher ranks absorb more large-magnitude directions. However, the online cost grows roughly linearly with $r$: $\mathbf{U}\mathbf{V}$ introduces $r(m+n)$ additional parameters in FP16. For example, the parameters ratio of $q_{\mathrm{proj}}$ layer in LLaMA3-8B accounts for $\sim6.25\%$ of the \emph{4-bit residual-component} storage at $r=32$, and increases to $\sim50\%$ at $r=256$, with proportional additional compute and memory access overhead. Therefore, to make low-rank decomposition practically beneficial, the low-rank component should also be stored and computed in low precision in LLMs.

\paragraph{Observation 3: Directly Quantizing the Low-Rank Component Introduces Large Error.}
The above rank--overhead dilemma suggests that both the low-rank and residual
components should be stored and computed in 4-bit. However,
we observe that naively quantizing the SVD-based low-rank component to 4-bit can introduce substantial errors, potentially offsetting the benefits of decomposition. \autoref{lowrank_error} reports the group-wise error distribution when we apply SVD (rank=256) to the $q_{\mathrm{proj}}$ layer of LLaMA3-8B at three depths and then quantize the low-rank component in 4-bit.
This yields a highly skewed error profile: while most groups incur small errors, a small fraction produces disproportionately large distortions, which can dominate the overall low-rank quantization error. 
A key reason is that the low-rank components are often scale-imbalanced and heavy-tailed, which 4-bit quantization represents poorly.
Inspired by rotation-based quantization~\citep{spinquant_quant,quantization_quarot}, a naive method inserts a fixed orthogonal transform (e.g., Hadamard) into the low-rank components to mix values across groups.
In practice, this helps little since layer statistics vary widely, so any fixed shared transform cannot simultaneously reduce their 4-bit distortion.
These observations motivate a \emph{learnable subspace decomposition} that reshapes the weight distributions and directly minimizes quantization error.
\begin{figure}[t]
  \centering
  \includegraphics[width=0.48\textwidth]{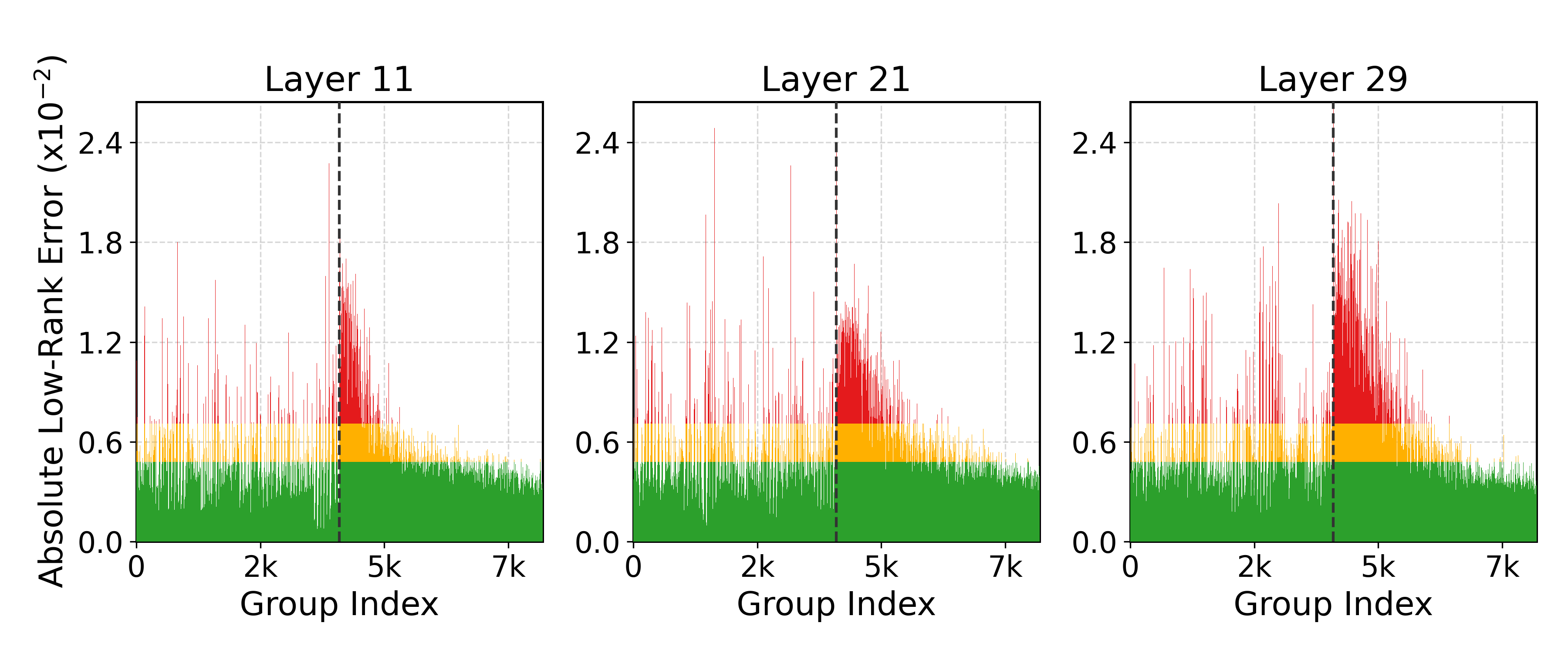}
  \caption{Quantization error distributions of low-rank component in three different $q_{\mathrm{proj}}$ layers under rank=256.}
  \label{lowrank_error}
\end{figure}

\begin{figure*}[t]
  \centering
  \includegraphics[width=\textwidth]{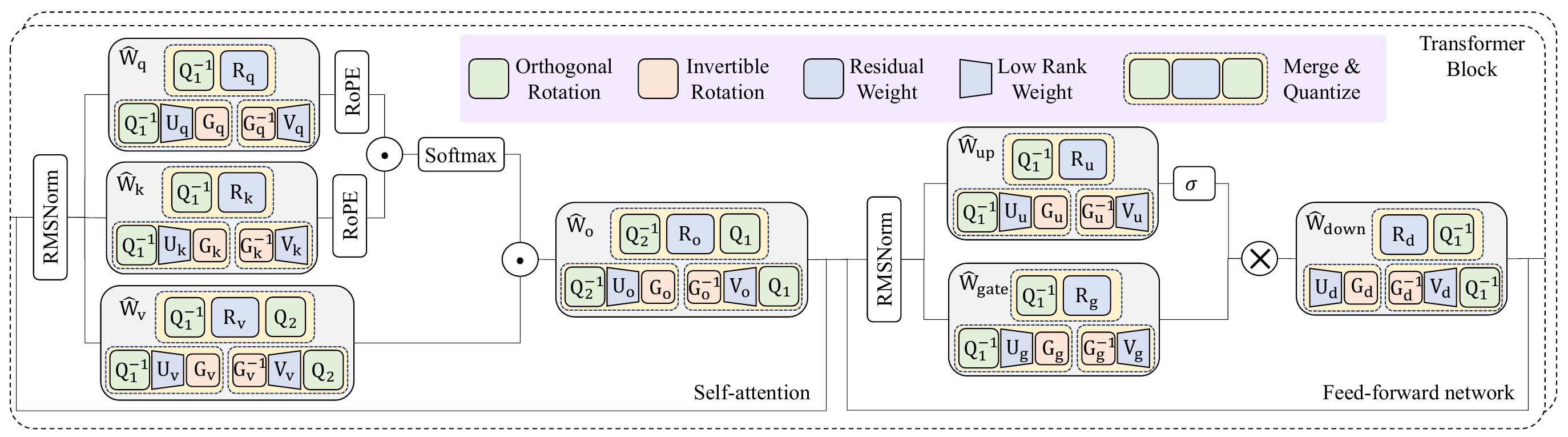}
  \caption{The overall framework of TwinQuant. TwinQuant decomposes each linear weight into low-rank and residual components, and learns lightweight global orthogonal and layer-specific invertible transforms to make both branches quantization-friendly. All transforms are folded offline, enabling merged 4-bit inference with no extra runtime overhead.}
  \label{overview}
\end{figure*}

\section{Method}
This section introduces TwinQuant from both algorithmic and systems perspectives, with the key significance being a \emph{learnable subspace decomposition}, and the overview is shown in \autoref{overview}. We first present a theory-guided re-parameterization over dual decomposed subspaces and derive an error decomposition that motivates learning global orthogonal rotations and layer-specific invertible transforms under structural constraints. We then describe a hybrid manifold optimizer and a fused dual-component kernel for efficient fully 4-bit execution.

\subsection{Theory-Guided Re-parameterization over Dual Decomposed Subspaces}
Our primary objective is to reshape the numerical distributions of the low-rank component, the residual component, and the activations so as to minimize the quantization error:

\begin{equation}\label{error_equation}
\Delta = \mathbf{\hat{Y}}
-
\mathcal{Q}(\mathbf{\hat{X}})
\left[
\mathcal{Q}(\mathbf{U})\,\mathcal{Q}(\mathbf{V})
+
\mathcal{Q}(\mathbf{R})
\right].
\end{equation}
Assuming all quantization noises are mutually independent and independent of the signal, the expected squared error can be rewritten as
\begin{equation}\label{expected_error}
\begin{aligned}
\mathbb{E}\!\left[\left\lVert \Delta \right\rVert_F^2\right]
&\approx
\underbrace{\mathbb{E}\!\left[\left\lVert \mathbf{E}_{\mathbf{\hat{X}}}\mathbf{\hat{W}} \right\rVert_F^2\right]}_{\text{Activation Error}}
+
\underbrace{\mathbb{E}\!\left[\left\lVert \hat{\mathbf{X}}\,\mathbf{E}_{\mathbf{\hat{W}}_\mathbf{\mathrm{UV}}} \right\rVert_F^2\right]}_{\text{Low-Rank Weight Error}}
\\
&\quad+
\underbrace{\mathbb{E}\!\left[\left\lVert \hat{\mathbf{X}}\,\mathbf{E}_{\mathbf{R}} \right\rVert_F^2\right]}_{\text{ Residual Weight Error}}.
\end{aligned}
\end{equation}
where $\mathbf{E}_{\mathbf{\hat{W}}_\mathbf{\mathrm{UV}}} = \mathbf{E}_{\mathbf{U}}\mathbf{V} + \mathbf{U}\mathbf{E}_{\mathbf{V}}$. 
A detailed derivation is provided in Appendix~\ref{theorem_appendix}.
This decomposition shows that the overall error is jointly determined by the quantization errors of activations, low-rank factors, and residual weights, whose magnitudes depend on their value distributions.
Therefore, directly quantizing these three parts under a fixed decomposition can be suboptimal, since the decomposed components are not explicitly adjusted to be friendly to 4-bit quantizers.

Recent work~\citep{quantization_quarot,spinquant_quant} has shown that orthogonal
matrices (e.g., Hadamard) can redistribute outliers across channels for both activations
and weights, thereby improving low-bit quantization. TwinQuant uses this observation as
the starting intuition, but extends it from distribution flattening to learnable
decomposed subspace re-parameterization. Specifically, we introduce two complementary
transforms: 
the layer-specific invertible matrices $\mathbf{G}\in\mathbb{R}^{r\times r}$ to re-distribute the low-rank component and reduce the low-rank weight error,
and the global orthogonal matrices $\mathbf{Q}\in\mathbb{R}^{m\times m}$ to jointly reshape the activation and residual distributions, targeting the remaining two error terms.
In this way, $\mathbf{Q}$ and $\mathbf{G}$ do not merely precondition a
fixed decomposition, but make the quantized low-rank and residual representation learnable:

\begin{equation}
\small
\begin{aligned}
\mathbf{G}^*, \mathbf{Q}^*
&= \arg\min_{\mathbf{G},\mathbf{Q}}
\left\lVert \mathbf{\hat{Y}}-\mathcal{Q}(\mathbf{\hat{X}}\mathbf{Q})
\left[
\begin{aligned}
&\mathcal{Q}(\mathbf{Q}^{-1}\mathbf{U}\mathbf{G})\,\mathcal{Q}(\mathbf{G}^{-1}\mathbf{V}) \\
&\quad + \mathcal{Q}(\mathbf{Q}^{-1}\mathbf{R})
\end{aligned}
\right]
\right\rVert_F
\end{aligned}
\end{equation}
This calibration reconstruction loss is a direct proxy for the quantization-induced output perturbation of the linear layer, as it measures the joint effect of activation, low-rank, and residual quantization errors on the layer output.
The transformed weight $\mathbf{U'}=\mathbf{Q}^{-1}\mathbf{U}\mathbf{G}$, $\mathbf{V'}=\mathbf{G}^{-1}\mathbf{V}$ and $\mathbf{R'}=\mathbf{Q}^{-1}\mathbf{R}$  can be  pre-processed offline to reduce additional runtime overhead. 
Although these transformed factors are algebraically equivalent to the original decomposition in full precision, they expose different numerical distributions to the quantizers, which is why optimizing $(\mathbf{Q},\mathbf{G})$ effectively learns a quantization-aware decomposition.

The following theorem provides theoretical support for our design: it shows that the re-parameterization can provably reduce the expected quantization error, and the improvement is governed by activation flattening gain and low-rank re-parameterization gain.
\begin{theorem}\label{theorem}
For any orthogonal transform $\mathbf{Q}$ and invertible transform $\mathbf{G}$, the re-parameterized subspaces yield an expected quantization error $\mathbb{E}[\lVert \Delta' \rVert_F^2]$ that satisfies
\begin{equation}
\mathbb{E}\!\left[\lVert \Delta'\rVert_F^2\right]
\;\le\;
\frac{\mathbb{E}\!\left[\lVert \Delta\rVert_F^2\right]}
{\min\!\left(\zeta(\mathbf{Q},\hat{\mathbf{X}}),\,\eta(\mathbf{G},\mathbf{Q})\right)}.
\end{equation}
Here activation flattening gain $\zeta(\mathbf{Q},\hat{\mathbf{X}})$ and low-rank re-parameterization gain $\eta(\mathbf{G},\mathbf{Q})$ are defined as
\begin{equation}
\begin{aligned}
\zeta(\mathbf{Q},\hat{\mathbf{X}})
&\triangleq
\frac{\mathbb{E} \left[\lVert \hat{\mathbf{X}} \rVert_{\infty}^{2}\right]}
{\mathbb{E} \left[\lVert \hat{\mathbf{X}}\mathbf{Q} \rVert_{\infty}^{2}\right]},
\\[2pt]
\eta(\mathbf{G},\mathbf{Q})
&\triangleq
\frac{
\mathbb{E} \left[
\lVert \mathbf{U} \rVert_{\infty}^{2}, \lVert \mathbf{V} \rVert_{F}^{2}
+
\lVert \mathbf{V} \rVert_{\infty}^{2}, \lVert \mathbf{U} \rVert_{F}^{2}
\right]
}{
\mathbb{E} \left[
\lVert \mathbf{U}' \rVert_{\infty}^{2}, \lVert \mathbf{V}' \rVert_{F}^{2}
+
\lVert \mathbf{V}' \rVert_{\infty}^{2}, \lVert \mathbf{U}' \rVert_{F}^{2}
\right]
}.
\end{aligned}
\end{equation}
\end{theorem}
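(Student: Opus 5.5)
Starting from the additive error decomposition in \eqref{expected_error}, I will write the expected error after re-parameterization as a sum of three terms, exactly as for $\Delta$ but with $\hat{\mathbf{X}}\mathbf{Q}$, $\mathbf{U}'$, $\mathbf{V}'$ and $\mathbf{R}'$ in place of $\hat{\mathbf{X}}$, $\mathbf{U}$, $\mathbf{V}$ and $\mathbf{R}$. The independence assumptions are kept. Each term is then bounded separately by a standard uniform-rounding noise model. Under symmetric max-scaling \eqref{equation 1}, each entry of $\mathbf{E}_{\mathbf{X}}$ has zero mean and variance at most $c\, s_{\mathbf{X}}^2$ with $c=1/12$, where $s_{\mathbf{X}}=\lVert\mathbf{X}\rVert_\infty/q_{\max}$. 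Independence of the noise from the signal then gives bounds of the form $\mathbb{E}\lVert \mathbf{E}_{\mathbf{A}}\mathbf{B}\rVert_F^2 \approx c'\,\lVert\mathbf{A}\rVert_\infty^2\lVert\mathbf{B}\rVert_F^2$, with $c'=c\cdot(\#\text{rows of }\mathbf{A})/q_{\max}^2$. The cross terms vanish in expectation.

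The three terms are handled as follows.
\begin{itemize}
\item \textbf{Activation term.} Since $\mathbf{Q}$ is orthogonal, $\lVert\mathbf{Q}^{-1}\hat{\mathbf{W}}\rVert_F=\lVert\hat{\mathbf{W}}\rVert_F$. The term is therefore proportional to $\mathbb{E}\lVert\hat{\mathbf{X}}\mathbf{Q}\rVert_\infty^2\lVert\hat{\mathbf{W}}\rVert_F^2$, which equals the original activation term divided by $\zeta(\mathbf{Q},\hat{\mathbf{X}})$.
\item \textbf{Low-rank term.} Write $\mathbf{E}_{\hat{\mathbf{W}}'_{UV}}=\mathbf{E}_{\mathbf{U}'}\mathbf{V}'+\mathbf{U}'\mathbf{E}_{\mathbf{V}'}$. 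Sandwiched by $\hat{\mathbf{X}}\mathbf{Q}$, and using independence of the two factor noises, the term is proportional to $\lVert\mathbf{U}'\rVert_\infty^2\lVert\mathbf{V}'\rVert_F^2+\lVert\mathbf{V}'\rVert_\infty^2\lVert\mathbf{U}'\rVert_F^2$. The activation norm factor is preserved by orthogonality, using $\lVert\hat{\mathbf{X}}\mathbf{Q}\mathbf{M}\rVert_F$-type identities after averaging over the isotropic noise. Comparing with the same expression for the unprimed factors gives exactly the ratio $1/\eta(\mathbf{G},\mathbf{Q})$.
\item \textbf{Residual term.} The term $\hat{\mathbf{X}}\mathbf{Q}\mathbf{E}_{\mathbf{R}'}$ is handled the same way, via $\lVert\mathbf{Q}^{-1}\mathbf{R}\rVert_\infty$. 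I would either absorb it into the activation-flattening factor, arguing that the same $\mathbf{Q}$ flattens the rows of $\mathbf{R}$ as well, or simply note that it is not increased relative to the baseline. This term is the weakest link in the argument.
\end{itemize}

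Once each term $T_i'$ satisfies $T_i'\le T_i/g_i$ with gains $g_i\in\{\zeta,\eta\}$, summing gives
\[
\sum_i T_i' \le \sum_i T_i/\min_i g_i = \mathbb{E}\lVert\Delta\rVert_F^2/\min(\zeta,\eta).
\]

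The main obstacle is making the per-term ratios exact rather than merely heuristic, for two reasons. First, the noise-variance model ties error to $\lVert\cdot\rVert_\infty^2$ only up to a bound, so the right-hand sides must be stated as the model-based expected error and not the true one. Second, the residual term has no dedicated gain in the statement. I would first try to treat the residual noise variance as governed by the activation-side scaling, so that it shares the factor $\zeta$. If that fails, I would add the mild assumption that $\lVert\mathbf{Q}^{-1}\mathbf{R}\rVert_\infty\le\lVert\mathbf{R}\rVert_\infty$, together with $\min(\zeta,\eta)\le1$ in the worst case, or else assume the flattening gain on $\mathbf{R}$ is at least $\min(\zeta,\eta)$. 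Either choice makes the min-ratio inequality go through.
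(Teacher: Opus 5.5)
Your skeleton matches the paper's proof. It uses the same split $\mathcal{E}_{act}+\mathcal{E}_{LR}+\mathcal{E}_{res}$ and the per-term bounds $\mathcal{E}'_{act}\le\mathcal{E}_{act}/\zeta$, $\mathcal{E}'_{LR}\le\mathcal{E}_{LR}/\eta$, $\mathcal{E}'_{res}\le\mathcal{E}_{res}/\zeta$, followed by summation and the minimum. For the residual, the paper takes your first option (sharing $\zeta$) and justifies it only by noting that the residual is rotated by the same $\mathbf{Q}$. That step is a genuine gap, and it cannot be closed without an extra hypothesis.

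Under your noise model, $\mathbb{E}\lVert\hat{\mathbf{X}}\mathbf{Q}\mathbf{E}_{\mathbf{R}'}\rVert_F^2=\frac{n}{12q_{\max}^2}\lVert\mathbf{Q}^{\top}\mathbf{R}\rVert_\infty^2\,\mathbb{E}\lVert\hat{\mathbf{X}}\rVert_F^2$. The activations enter only through $\lVert\hat{\mathbf{X}}\mathbf{Q}\rVert_F=\lVert\hat{\mathbf{X}}\rVert_F$, which $\mathbf{Q}$ leaves unchanged. So the residual ratio is the weight-side gain $\rho=\lVert\mathbf{R}\rVert_\infty^2/\lVert\mathbf{Q}^{\top}\mathbf{R}\rVert_\infty^2$, which is unrelated to $\zeta$.

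Here is a concrete failure. Take $m=2$, $n=r=1$, $\mathbf{Q}$ the normalized $2\times 2$ Hadamard matrix, every row of $\hat{\mathbf{X}}$ equal to $(1,0)$, $\mathbf{R}=(1,1)^{\top}$, $\mathbf{U}=(\epsilon,0)^{\top}$ and $\mathbf{V}=\epsilon$. Then $\zeta=2$ and $\eta=4/3$ for every $\mathbf{G}$. Dropping the common factor $1/(12q_{\max}^2)$, the activation term falls from $\approx 2b$ to $\approx b$, while the residual term rises from $b$ to $2b$. The model error therefore stays $\approx 3b$ instead of dropping to at most $9b/4$.

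So the statement as written fails even inside the proxy model, and the option of sharing $\zeta$ cannot be made rigorous. Your second fallback ($\lVert\mathbf{Q}^{-1}\mathbf{R}\rVert_\infty\le\lVert\mathbf{R}\rVert_\infty$ together with $\min(\zeta,\eta)\le 1$) only covers the regime where the bound already permits the error to grow, not the regime the theorem is about. Your last option does work: assume $\rho\ge\min(\zeta,\eta)$, or equivalently state the bound with $\min(\zeta,\eta,\rho)$.

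The low-rank step also does not give \emph{exactly} $1/\eta$. Averaging over $\mathbf{E}_{\mathbf{U}'}$ does give $\lVert\hat{\mathbf{X}}\rVert_F^2\lVert\mathbf{U}'\rVert_\infty^2\lVert\mathbf{V}'\rVert_F^2$. Averaging over $\mathbf{E}_{\mathbf{V}'}$, however, gives $n\lVert\mathbf{V}'\rVert_\infty^2\lVert\hat{\mathbf{X}}\mathbf{Q}\mathbf{U}'\rVert_F^2=n\lVert\mathbf{V}'\rVert_\infty^2\lVert\hat{\mathbf{X}}\mathbf{U}\mathbf{G}\rVert_F^2$, with the same constant. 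Orthogonality removes $\mathbf{Q}$ but not $\mathbf{G}$. For anisotropic activations, $\lVert\hat{\mathbf{X}}\mathbf{U}\mathbf{G}\rVert_F$ is not determined by $\lVert\mathbf{U}'\rVert_F$. Even when $\hat{\mathbf{X}}^{\top}\hat{\mathbf{X}}\propto\mathbf{I}$, the two pieces carry weights $m$ and $n$, so the equally weighted $\eta$ is the exact ratio only if also $m=n$.

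The paper's proof has the same hole: it simply declares $\mathcal{E}'_{LR}\le\mathcal{E}_{LR}/\eta$ as a consequence of the inverse-gain definition. To make your argument a proof, you need that isotropy assumption with $m=n$, or you must redefine $\eta$ as the ratio of the actual model expressions. Your activation-term step is correct as written.
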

A detailed proof is provided in Appendix~\ref{theorem_appendix}.
While \autoref{theorem} motivates searching for optimal $(\mathbf{G},\mathbf{Q})$, deriving a closed-form solution is generally intractable.
This is because the objective involves non-smooth quantization operators and couples $\mathbf{G}$ and $\mathbf{Q}$ through multiple bilinear terms (e.g., $\mathbf{Q}^{-1}\mathbf{U}\mathbf{G}$ and $\mathbf{G}^{-1}\mathbf{V}$), making the problem highly non-convex even without quantization.
Therefore, we adopt a post-training optimization strategy that directly optimizes the transforms on a small calibration set.
Specifically, we treat $\mathbf{G}$ and $\mathbf{Q}$ as learnable parameters and minimize the reconstruction error over a few samples to numerically approximate the theoretical optimum.
This raises a practical question: how can we optimize these matrices efficiently while respecting their structural constraints?

\subsection{Joint Optimization on Stiefel and General Linear Manifolds}
\label{sec:optimization}

To further suppress the quantization error while maintaining numerical consistency, we propose a joint optimization framework for the global rotation matrices $\{\mathbf{Q}_1, \mathbf{Q}_2\}$ and the layer-specific transformation $\mathbf{G}$. Unlike previous methods that restrict all transformations to the Stiefel manifold, our approach introduces a hybrid manifold optimization strategy. The optimization objective is formulated as:
\begin{equation}
\mathbf{Q}_1^\star,\mathbf{Q}_2^\star,\mathbf{G}^\star
=\operatorname*{arg\,min}_{\mathbf{Q}_1,\mathbf{Q}_2\in\mathcal{M},\ \mathbf{G}\in\mathcal{G}}
\ \mathcal{L}_{\mathcal{Q}}(\mathbf{Q}_1,\mathbf{Q}_2,\mathbf{G}\mid \mathbf{\hat{W}},\mathbf{\hat{X}}).
\label{eq:objective}
\end{equation}
where $\mathcal{M} = \{ \mathbf{Q} \in \mathbb{R}^{n \times n} : \mathbf{Q}^\top \mathbf{Q} = \mathbf{I} \}$ denotes the \textit{Stiefel manifold} consisting of all orthogonal matrices and $\mathcal{G} = \mathrm{GL}(n, \mathbb{R})$ denotes the \textit{General Linear manifold} consisting of all invertible matrices.
We optimize $\mathbf{G}$ over $\mathcal{G}$ to endow each layer with additional scaling and shearing degrees of freedom ($n^2$ on $\mathcal{G}$ v.s. $\frac{n(n-1)}{2}$ on $\mathcal{M}$) that better absorbs quantization anisotropy and thus reduces error, while $\mathbf{Q}_1$ and $\mathbf{Q}_2$ kept orthogonal on $\mathcal{M}$ since their inverses must be fused across RMSNorm to the adjacent weights, an equivalence that holds for orthogonal transforms but generally breaks for arbitrary invertible transforms, which would otherwise alter the numerical inconsistency.

\noindent\textbf{Hybrid Manifold Optimizer.}
To solve \autoref{eq:objective}, we propose a hybrid optimizer that updates parameters according to their underlying geometry. We keep the global rotations $\mathbf{Q}_1,\mathbf{Q}_2$ strictly orthonormal for norm preservation, and update them on $\mathcal{M}$ using Cayley SGD \cite{stiefel_manifold}:
\begin{equation}
\mathbf{Q}^{(t+1)}=\left(\mathbf{I}-\frac{\alpha}{2}\mathbf{Y}\right)^{-1}\left(\mathbf{I}+\frac{\alpha}{2}\mathbf{Y}\right)\mathbf{Q}^{(t)},\quad
\mathbf{Y}=\hat{\mathbf{O}}-\hat{\mathbf{O}}^{\top},
\label{eq:cayley}
\end{equation}
where $\mathbf{O} = \nabla_{\mathbf{Q}} \mathcal{L}$ is the Euclidean gradient. 
For the layer-specific transformation $\mathbf{G}$, we optimize them via learned polar parameterization $\mathbf{G}=\mathbf{P}\mathbf{S}$, where $\mathbf{P}\in\mathcal{M}$ is updated by the same Cayley rule and $\mathbf{S}$ is a symmetric positive definite matrix:
\begin{equation}
\mathbf{S}=e^{\gamma}(\mathbf{L}\mathbf{L}^{\top}),
\label{eq:polar}
\end{equation}
with $\mathbf{L}$ a Cholesky factor and $\gamma$ a learnable scale, ensuring $\mathbf{G}$ is always nonsingular and has a well-defined inverse during training. The optimizer performs manifold updates for $(\mathbf{Q}_1,\mathbf{Q}_2,\mathbf{G})$ and momentum SGD for Euclidean parameters $(\mathbf{L},\gamma)$, while stabilizing training with a conditioning regularizer
$\lambda\sum(\|\mathrm{diag}(\mathbf{L})\|^2+\gamma^2)$ to keep $\mathbf{G}$ near identity at initialization.

\noindent\textbf{Stage-Wise Decoupling Training.} We implement a three-stage scheduling policy to stabilize the joint optimization:
(i) \textit{Global Alignment}: Only $\mathbf{Q}_1, \mathbf{Q}_2$ are trained to align global feature distributions;
(ii) \textit{Invertible Adaptation}: Global rotations are frozen, and the invertible $\mathbf{G}$ is optimized to minimize layer-wise quantization error;
(iii) \textit{Joint Refinement}: All parameters $\{\mathbf{Q}_1, \mathbf{Q}_2, \mathbf{P}, \mathbf{L}, \gamma\}$ are fine-tuned simultaneously. This schedule is more stable than joint training since the low-rank component is jointly modulated by the global rotations ($\mathbf{Q}_1, \mathbf{Q}_2$) and its internal transform ($\mathbf{G}$), which makes the optimization highly coupled and prone to gradient conflict. In contrast, our stage-wise decoupling reduces this interference and yields smoother convergence.

\subsection{Dual-Component Kernel Design}

\begin{figure}[t]
  \centering
  \includegraphics[width=0.48\textwidth]{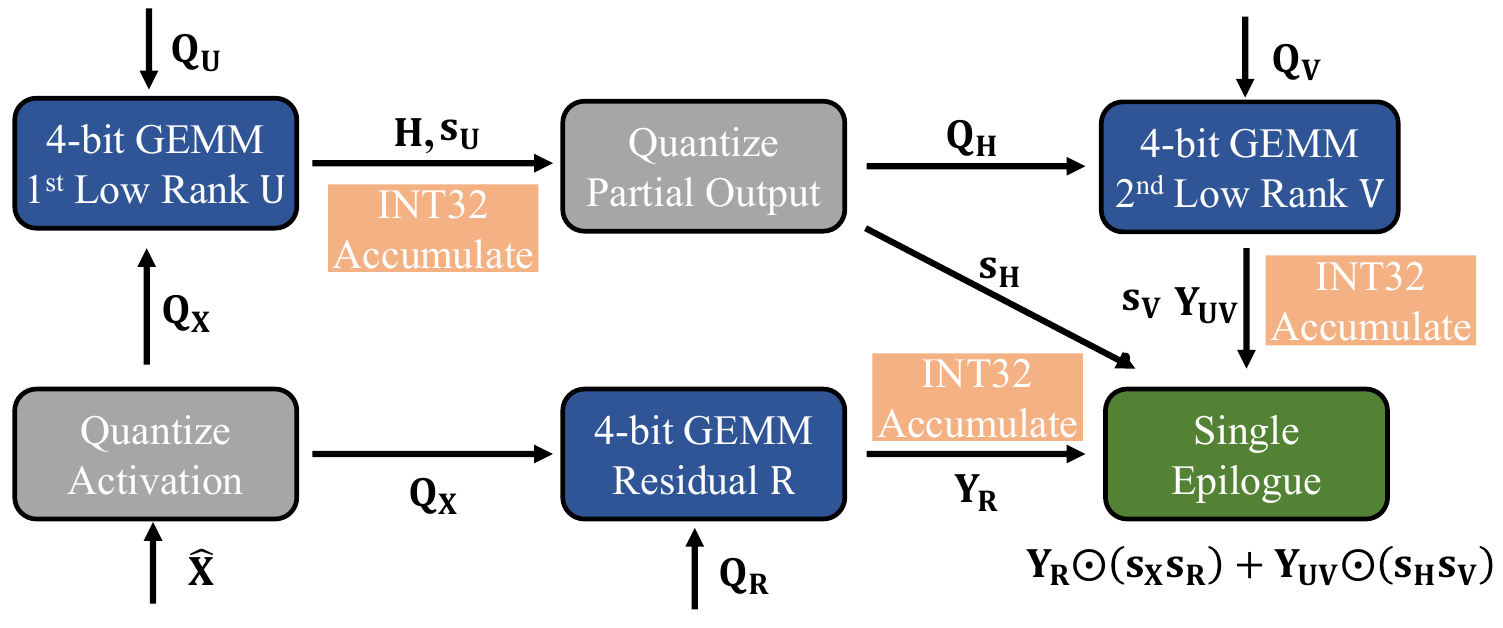}
  \caption{The workflow of fused dual-component low-precision kernel. It fuses the low-rank and residual GEMMs with a single epilogue to reduce intermediate memory traffic.}
  \label{kernel_fusion}
\end{figure}
Although the low-rank component has a small theoretical compute cost, executing it as a standalone component can still incur substantial latency overhead.
SVDQuant~\citep{li2024svdquant} mitigates this issue with a fused operator, but keeps the low-rank path in FP16.
In contrast, TwinQuant quantizes \emph{both} the residual and low-rank components to 4-bit, enabling fully low-bit execution while introducing a key systems challenge: the low-rank path is a \emph{two-stage} product, whereas 4-bit TensorCore GEMM can only consume quantized partial outputs.
Therefore, the first 4-bit GEMM produces an intermediate $\mathbf{P}$ in INT32 accumulation rather than in an int4-compatible form.
Na\"{\i}vely materializing $\mathbf{P}$ (e.g., INT32), then re-loading and re-quantizing it for the second 4-bit GEMM would incur extra kernel launches and costly global memory traffic, often dominating the low-rank compute.
Consequently, $\mathbf{P}$ must be re-quantized and packed \emph{on the fly} and consumed by the second GEMM directly on-chip, without being written back to global memory.

\begin{table*}[t]
\caption{Comparison of the perplexity score on WikiText2 and averaged accuracy on six zero-shot commonsense reasoning tasks.}
\label{precision}
\centering
\footnotesize
\setlength{\tabcolsep}{3.5pt}
\renewcommand{\arraystretch}{1.08}
\begin{tabular}{@{}ll
cc@{\hspace{7pt}}
cc@{\hspace{7pt}}
cc@{\hspace{7pt}}
cc@{\hspace{7pt}}
cc@{\hspace{7pt}}
cc@{}
}
\toprule
\multirow{3}{*}{\textbf{Precision}} &
\multirow{3}{*}{\textbf{Method}} &
\multicolumn{2}{c}{\textbf{LLaMA3 3B}} &
\multicolumn{2}{c}{\textbf{LLaMA3 8B}} &
\multicolumn{2}{c}{\textbf{Qwen3 4B}} &
\multicolumn{2}{c}{\textbf{Qwen3 8B}} &
\multicolumn{2}{c}{\textbf{Qwen3 14B}} &
\multicolumn{2}{c}{\textbf{Qwen3 32B}} \\
\cmidrule(lr){3-4}\cmidrule(lr){5-6}\cmidrule(lr){7-8}\cmidrule(lr){9-10}\cmidrule(lr){11-12}\cmidrule(lr){13-14}
& &
\makecell{0-shot\\Avg.} & \makecell{Wiki\\($\downarrow$)} &
\makecell{0-shot\\Avg.} & \makecell{Wiki\\($\downarrow$)} &
\makecell{0-shot\\Avg.} & \makecell{Wiki\\($\downarrow$)} &
\makecell{0-shot\\Avg.} & \makecell{Wiki\\($\downarrow$)} &
\makecell{0-shot\\Avg.} & \makecell{Wiki\\($\downarrow$)} &
\makecell{0-shot\\Avg.} & \makecell{Wiki\\($\downarrow$)} \\
\midrule
W16A16 & -- 
& 66.2 & 10.7 
& 73.5 & 8.6 
& 66.8 & 13.7 
& 71.6 & 9.7 
& 74.2 & 8.6 
& 75.2 & 7.6 \\
\midrule

\multirow{3}{*}{W4A16} 
& RTN  
& 60.9 & 18.8 
& 71.2 & 10.5 
& 63.3 & 17.6 
& 68.1 & 12.0 
& 70.3 & 9.9 
& 68.1 & 12.7 \\
& GPTQ 
& 61.7 & 15.2 
& 72.4 & 9.0 
& 64.3 & 14.5 
& 69.7 & 10.3 
& 72.7 & 9.2 
& 73.5 & 8.3 \\
& AWQ  
& 63.1 & 12.7 
& 72.6 & 10.3 
& 64.4 & 16.6 
& 72.3 & 10.5 
& 69.8 & 9.6 
& 73.8 & 8.2 \\
\midrule

\multirow{6}{*}{W4A8} 
& RTN 
& 60.7 & 29.0 
& 71.0 & 10.6 
& 58.8 & 30.6 
& 64.4 & 12.3 
& 68.3 & 11.9 
& 67.4 & 13.6 \\
& SmoothQuant 
& 59.8 & 288.5 
& 60.4 & 13.3 
& 59.9 & 22.6 
& 63.8 & 12.5 
& 68.1 & 11.8 
& 67.1 & 13.0 \\
& QuaRot 
& 62.3 & 12.4 
& 70.6 & 11.0 
& 60.9 & 16.7 
& 67.1 & 12.9 
& 70.7 & 11.4 
& 71.0 & 11.8 \\
& SpinQuant 
& 64.9 & 11.5 
& 72.0 & 9.0 
& 65.1 & 19.8 
& 69.9 & 12.4 
& 72.5 & 11.0 
& 73.1 & 10.9 \\
& FlatQuant 
& 66.9 & 10.7 
& 72.5 & 9.8 
& 64.4 & 19.4 
& 69.9 & 12.0 
& 73.4 & 10.6 
& 74.4 & 9.8 \\
& \cellcolor{gray}TwinQuant 
& \cellcolor{gray}65.4 & \cellcolor{gray}10.8 
& \cellcolor{gray}72.8 & \cellcolor{gray}8.9 
& \cellcolor{gray}65.6 & \cellcolor{gray}18.3 
& \cellcolor{gray}70.8 & \cellcolor{gray}11.5 
& \cellcolor{gray}73.5 & \cellcolor{gray}10.5 
& \cellcolor{gray}74.3 & \cellcolor{gray}9.9 \\
\midrule

\multirow{7}{*}{W4A4} 
& RTN 
& 43.2 & 741.9 
& 40.4 & 92.9 
& 41.5 & 8791 
& 40.1 & 4392 
& 42.1 & 18749 
& 45.4 & 1796 \\
& SmoothQuant 
& 44.7 & 372.3 
& 52.8 & 19.4 
& 40.1 & 9910 
& 40.1 & 3360.1 
& 41.3 & 21675 
& 44.4 & 1806 \\
& QuaRot 
& 53.4 & 26.9 
& 67.4 & 13.1 
& 56.6 & 21.5 
& 63.4 & 24.5 
& 67.2 & 18.2 
& 67.4 & 15.6 \\
& SpinQuant 
& 63.6 & 11.9 
& 69.4 & 9.5 
& 64.0 & 21.7 
& 68.3 & 14.8 
& 71.2 & 13.0 
& 72.0 & 12.1 \\
& SVDQuant 
& 64.7 & 11.7 
& 68.2 & 12.5 
& 63.9 & 20.5 
& 68.8 & 14.9 
& 71.4 & 12.8 
& 72.2 & 11.6 \\
& FlatQuant 
& 65.5 & 11.3 
& 70.1 & 11.0 
& 64.1 & 20.0 
& 69.3 & 13.4 
& 73.1 & 11.4 
& 72.8 & 11.0 \\
& \cellcolor{gray}TwinQuant 
& \cellcolor{gray}65.6 & \cellcolor{gray}11.1 
& \cellcolor{gray}70.9 & \cellcolor{gray}9.4 
& \cellcolor{gray}65.0 & \cellcolor{gray}18.6 
& \cellcolor{gray}70.2 & \cellcolor{gray}13.2 
& \cellcolor{gray}72.8 & \cellcolor{gray}11.8 
& \cellcolor{gray}73.3 & \cellcolor{gray}10.4 \\
\bottomrule
\end{tabular}
\end{table*}

\noindent\textbf{Fused two-stage low-rank pipeline.}
We design a fused kernel that pipelines the two-stage low-rank computation entirely on-chip and merges its output with the residual component before a single write-back.
The kernel first quantizes the input activation tile once and reuses it for both components.
On the low-rank component, it computes the first 4-bit GEMM in int32 accumulation to obtain an intermediate $\mathbf{H}$, which is kept on-chip and directly prepared as the input to the second 4-bit GEMM.
In parallel, the residual component runs a 4-bit GEMM on the same quantized activation.
Both component outputs are accumulated in registers and deferred to a shared epilogue for final fusion.

\noindent\textbf{On-the-fly scaling with a single epilogue.}
As shown in \autoref{kernel_fusion}, we convert the intermediate $\mathbf{H}$ into a packed 4-bit tensor $\mathbf{Q}_H$ in shared memory by estimating an intermediate scale $s_H$ from the INT32 accumulator and re-quantizing on the fly.
This eliminates materializing $\mathbf{H}$ in global memory and allows the second low-rank GEMM to directly consume $\mathbf{Q}_H$.
Finally, we apply a single epilogue that folds the scale factors and merges the two components, followed by FP16 accumulation and one global store.
This single-epilogue design avoids intermediate global-memory traffic for $\mathbf{H}$ and keeps the dual-component execution bandwidth-efficient.

\section{Experiments}

\subsection{Setups}
\label{section5-1}
\textbf{Models and metric.} We benchmark against two mainstream LLMs: the LLaMA3~\citep{llama3} models (3B/8B) and Qwen3~\citep{qwen3} models (4B/8B/14B/32B). Following previous works~\citep{flatquant,spinquant_quant}, we randomly sample the 128 prompts from WikiText2 dataset~\citep{wiki_dataset} for calibration, each sentence with 2048 tokens. To evaluate the commonsense reasoning capability of our method, we use six zero-shot evaluation tasks, including ARC-Challenge, ARC-Easy~\citep{ARC-C}, HellaSwag~\citep{hellaswag}, LAMBADA~\citep{lambada}, PIQA~\citep{PIQA}, and WinoGrande~\citep{WinoGrande}. Additionally, we also report the perplexity score on WikiText2~\citep{wiki_dataset} datasets. All accuracy metrics are tested on lm-eval~\citep{eval-harness}.

\textbf{Baselines.}
We compare TwinQuant against eight popular PTQ methods, including the naive quantization method RTN~\citep{RTN}, the weight-only quantization GPTQ~\citep{gptq} and AWQ~\citep{awq_2024}, the weight-activation quantization SmoothQuant~\citep{quant_smoothquant} and three recent state-of-the-art methods QuaRot~\citep{quantization_quarot}, SpinQuant~\citep{spinquant_quant}, FlatQuant~\citep{flatquant}, SVDQuant~\citep{li2024svdquant}.  Following prior work~\citep{quantizaion_qserve, li2024svdquant}, we denote $x$-bit weights and $y$-bit activations as $\mathbf{W}x\mathbf{A}y$. See Appendix~\ref{baseline_appendix} for more details.

\textbf{Hardware.}
We evaluate TwinQuant on two GPU platforms. \emph{Workstation} (Environment 1): an NVIDIA RTX~4090 (24~GB) GPU paired with an Intel Xeon Gold~6430 CPU. \emph{Inference server} (Environment 2): an NVIDIA L20 (48~GB) GPU paired with an Intel Xeon Platinum~8457C CPU.

\textbf{Implementation details.} 
We apply group-wise quantization to both weights and activations, where each contiguous group of 128 elements shares one quantization scale. We use group size of 128, and rank $r=128$ as the default setting for TwinQuant. Please refer to Appendix~\ref{implement} for more details.

\begin{figure*}[t]
  \centering
\includegraphics[width=0.9\textwidth,height=0.26\textheight,keepaspectratio]{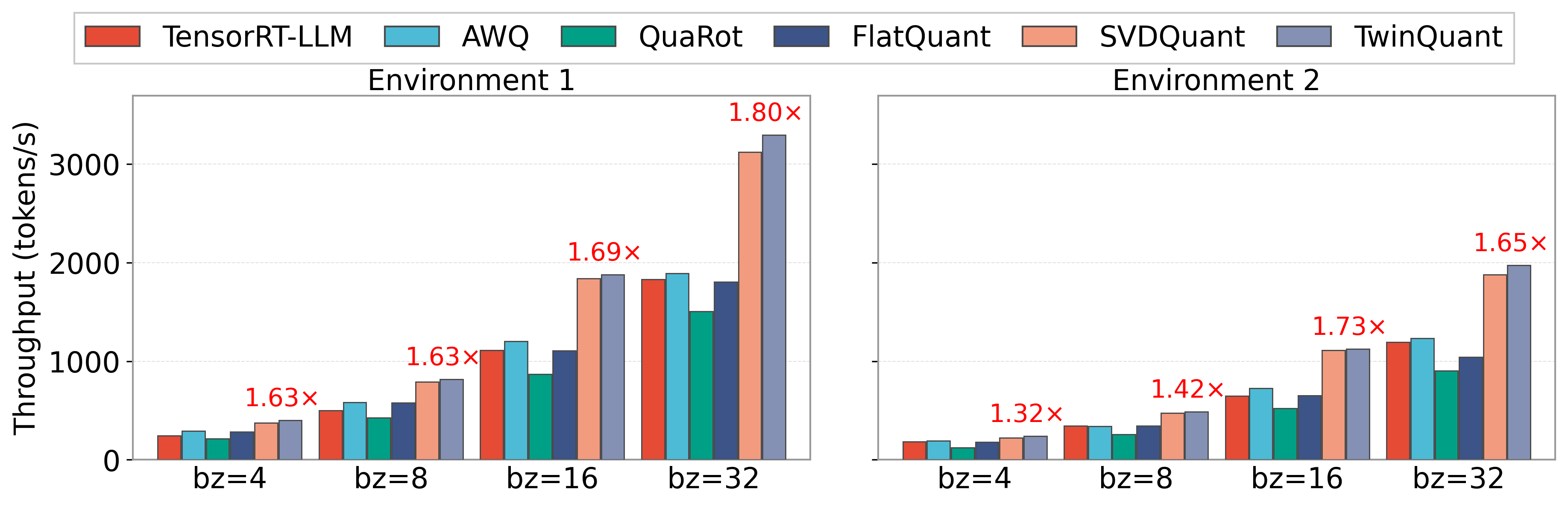}
\caption{Throughput of LLaMA3-8B model across different batch sizes with sequence lengths of 1024 input tokens and 256 output tokens in Environments 1 and 2. SpinQuant is omitted because it does not provide a comparable real W4A4 NVIDIA GPU kernel, making fake-quantized throughput not directly comparable.}
  \label{fig:speedup}
\end{figure*}

\subsection{Accuracy Results}
\autoref{precision} presents the comparison of WikiText2 perplexity and average accuracy on six zero-shot tasks for LLaMA3 and Qwen3 models. See more details in Appendix \ref{zero_shot_appendix}.

\textbf{Language generation tasks.}
Under W4A8, TwinQuant achieves competitive overall perplexity and stays close to full precision. For example, on Qwen3-32B it reduces perplexity from 10.9 (SpinQuant) to 9.9, showing stronger robustness to activation quantization noise. Under the more challenging W4A4 setting, many PTQ baselines exhibit perplexity explosions or collapse (e.g., RTN and SmoothQuant), whereas TwinQuant remains stable across all model families and substantially closes the gap to W16A16. TwinQuant significantly improves over rotation-based QuaRot, reducing perplexity by up to 15.8 points and by 2.9--11.3 points across Qwen3 models. Compared with the affine-based FlatQuant, TwinQuant achieves comparable or better perplexity in most cases, with up to 1.6 points improvement across all models. Compared with SVDQuant (rank=32), TwinQuant further reduces W4A4 perplexity by 0.6--3.1 points across all models.

\textbf{Zero-shot tasks.}
TwinQuant delivers strong zero-shot accuracy under both W4A8 and W4A4. In W4A8, it is competitive with leading PTQ baselines (e.g., 65.4\% on LLaMA3-3B vs. 63.1\% for AWQ). In the more challenging W4A4 setting, it performs on par with or better than the best prior methods: on Qwen3-32B it reaches 73.3\% average accuracy, exceeding SpinQuant by 1.3 points, FlatQuant by 0.5 points, and SVDQuant by 1.1 points. Across all models, TwinQuant also consistently outperforms SVDQuant by 0.9--2.7 points in average zero-shot accuracy. While some baselines fail on Qwen and LLaMA models, in contrast, TwinQuant remains stable across both model families, highlighting its robustness under fully 4-bit quantization.

\subsection{Speed Measurement}

\autoref{fig:speedup} 
summarizes the end-to-end throughput of TwinQuant over FP16 TensorRT-LLM and recent W4A16 and W4A4 quantization systems with sequence lengths of 1024 input tokens and 256 output tokens in Environments 1 and 2. 
Compared to the FP16 TensorRT-LLM, TwinQuant achieves 1.63-1.8$\times$ and 1.32–1.73$\times$ acceleration in Environments 1 and 2, respectively. Against the weight-only baseline AWQ using TensorRT Engine, TwinQuant provides up to 1.74$\times$ and 1.6$\times$ speedup. When compared with prior SOTA W4A4 methods, TwinQuant delivers an average 2.04$\times$ improvement over the rotation-based QuaRot and a 1.59$\times$ gain over the affine-based FlatQuant, and further achieves up to 1.07$\times$ speedup over the stronger SVDQuant (rank=128). Overall, these results show that TwinQuant consistently outperforms strong baselines, better unlocking the efficiency potential of 4-bit inference on modern GPUs. Additional kernel profile time are reported in Appendix ~\ref{kernel_profile}.

\subsection{Sensitivity Analysis of the Rank Number.}\label{sensitivity_analysis_rank}
Table~\ref{tab:diff_rank_acc} and \autoref{fig:diff_rank} show a clear accuracy--efficiency trade-off with respect to the truncation rank. Increasing rank from 32 to 128 substantially improves model quality: the average zero-shot accuracy increases from 68.0 to 70.9, while WikiText perplexity decreases from 11.3 to 9.4. However, the benefit quickly saturates beyond ($r=128$). Further increasing rank to 256 only improves the average accuracy by 0.2 point and reduces perplexity by 0.1, but nearly doubles the memory and latency overheads of the low-rank path from 4.9\% and 4.3\% to 9.9\% and 8.7\%. These results indicate that larger ranks beyond 128 are not cost-effective, since the extra computation and storage bring negligible quality gains. We therefore use ($r=128$) as the default setting, which achieves a strong accuracy--overhead trade-off.

\begin{figure}[t]
  \centering
  \includegraphics[width=0.48\textwidth]{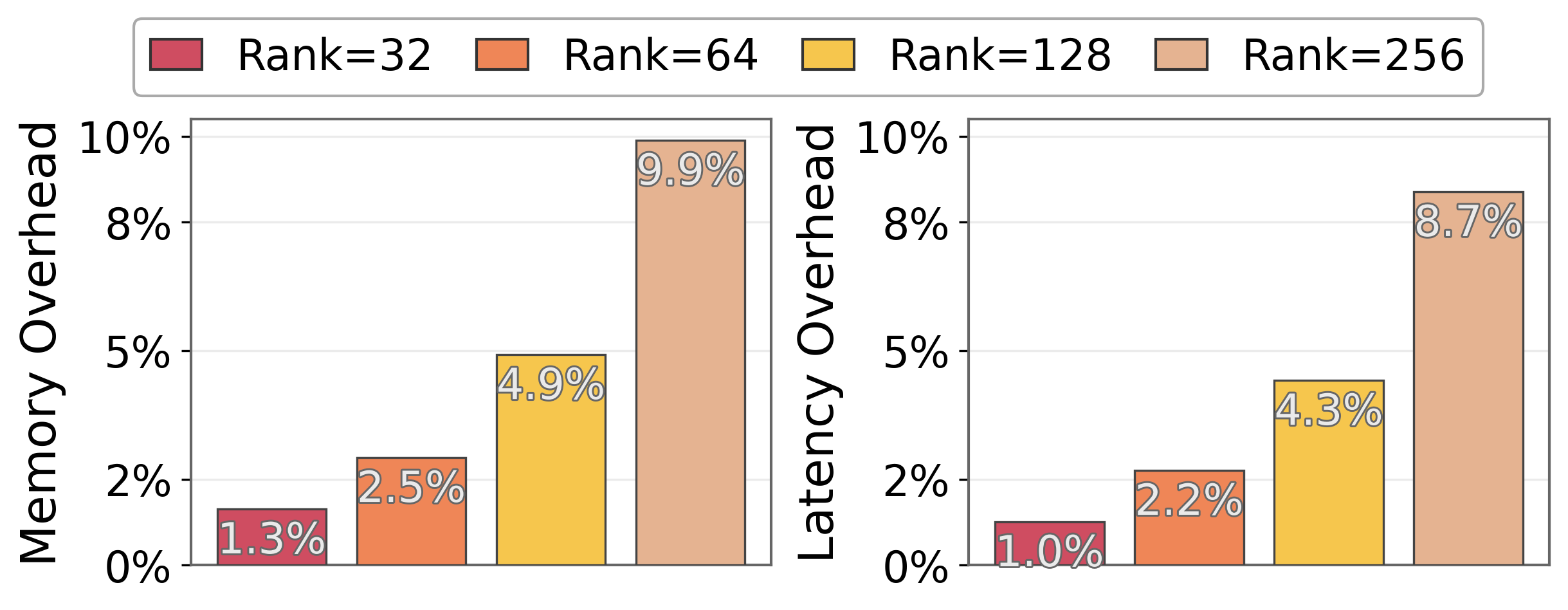}
  \caption{Memory and latency overhead of the low-rank component on LLaMA3-8B across different ranks in Environment 1.}
  \label{fig:diff_rank}
\end{figure}

\begin{table}[t]
\centering
\renewcommand{\arraystretch}{1.12}
\begin{tabular*}{0.78\linewidth}{@{\extracolsep{\fill}}ccc}
\toprule
\textbf{Rank} & \textbf{ 0-shot Avg. (\%)} & \textbf{Wiki PPL ($\downarrow$)} \\
\midrule
32  & 68.0 & 11.3 \\
64  & 69.7 & 10.5 \\
128 & 70.9 & 9.4 \\
256 & 71.1 & 9.3 \\
\bottomrule
\end{tabular*}
\caption{Comparison of the average zero-shot accuracy and WikiText perplexity on LLaMA3-8B across different ranks.}
\label{tab:diff_rank_acc}
\vspace{-4pt}
\end{table}

\subsection{Ablation study}

\begin{table}[t]
\centering
\small
\setlength{\tabcolsep}{3.5pt}
\renewcommand{\arraystretch}{1.02}
\begin{tabular}{llcc}
\toprule
\textbf{Models} & \textbf{Method} & \textbf{0-shot Avg.} & \textbf{Wiki ($\downarrow$)} \\
\midrule
\multirow{5}{*}{\makecell{LLaMA3\\-8B}}
& FP16 & 73.5 & 8.6 \\
\cmidrule(lr){2-4}
& Naive 4-bit & 41.7 & 91.6 \\
& +Low-Rank & 61.1 & 19.6 \\
& +Hadamard & 64.9 & 12.4 \\
& TwinQuant & 70.9 & 9.4 \\
\midrule
\multirow{5}{*}{\makecell{Qwen3\\-8B}}
& FP16 & 71.6 & 9.7 \\
\cmidrule(lr){2-4}
& Naive 4-bit & 41.5 & 4188 \\
& +Low-Rank & 62.8 & 20.3 \\
& +Hadamard & 66.0 & 16.3 \\
& TwinQuant & 70.2 & 13.2 \\
\bottomrule
\end{tabular}
\caption{Ablation study of TwinQuant on LLaMA3-8B and Qwen3-8B, reporting the average accuracy on six zero-shot tasks and WikiText2 perplexity.}
\label{ablation_accuracy}
\vspace{-4pt}
\end{table}

\begin{figure*}[t]
\centering
\begin{subfigure}[t]{0.46\linewidth}
\centering
\includegraphics[width=\linewidth]{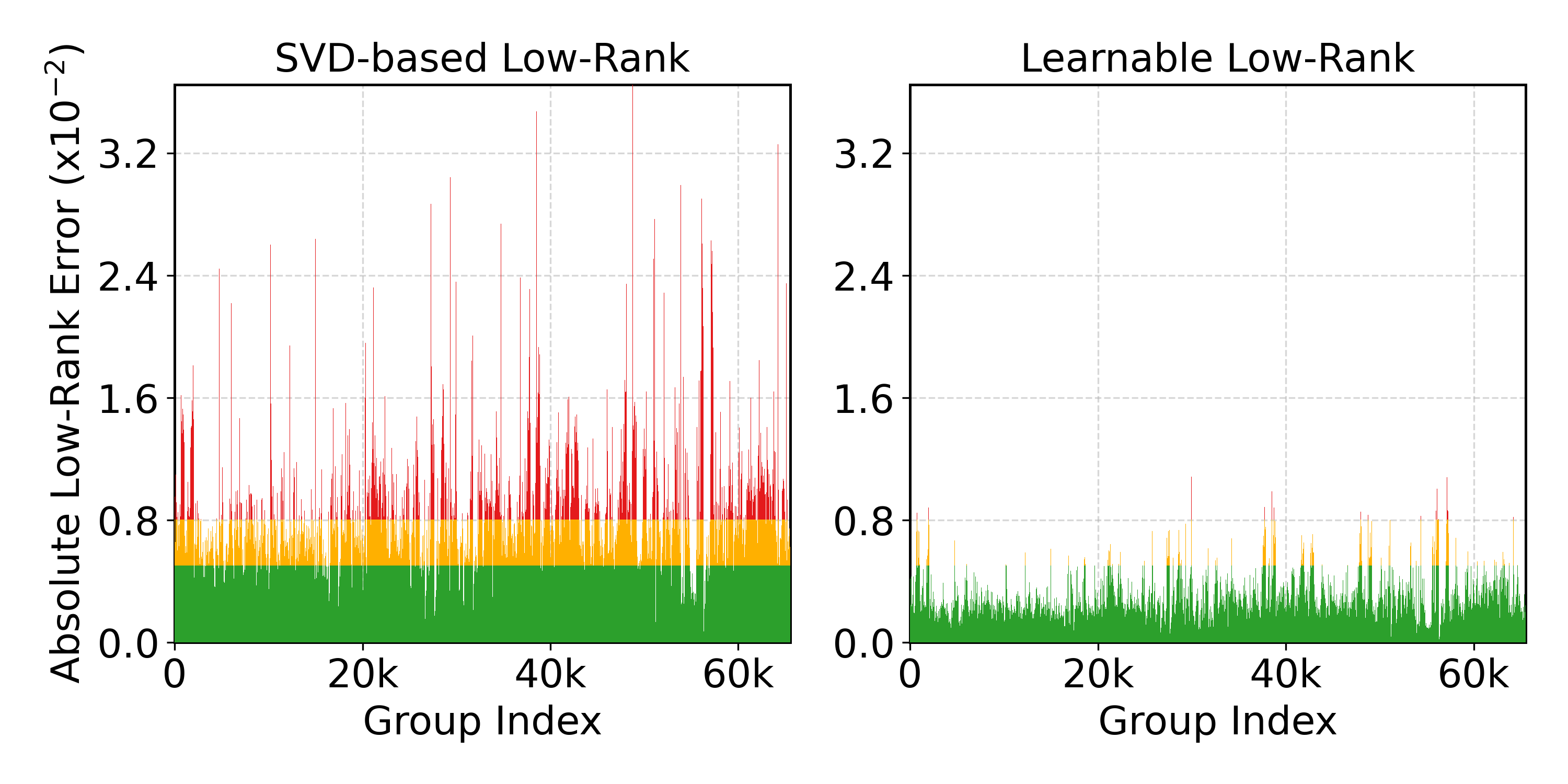}
\caption{Comparison of SVD-based and learnable low-rank quantization error.}
\label{fig:direct_lowrank}
\end{subfigure}\hfill
\begin{subfigure}[t]{0.46\linewidth}
\centering
\includegraphics[width=\linewidth]{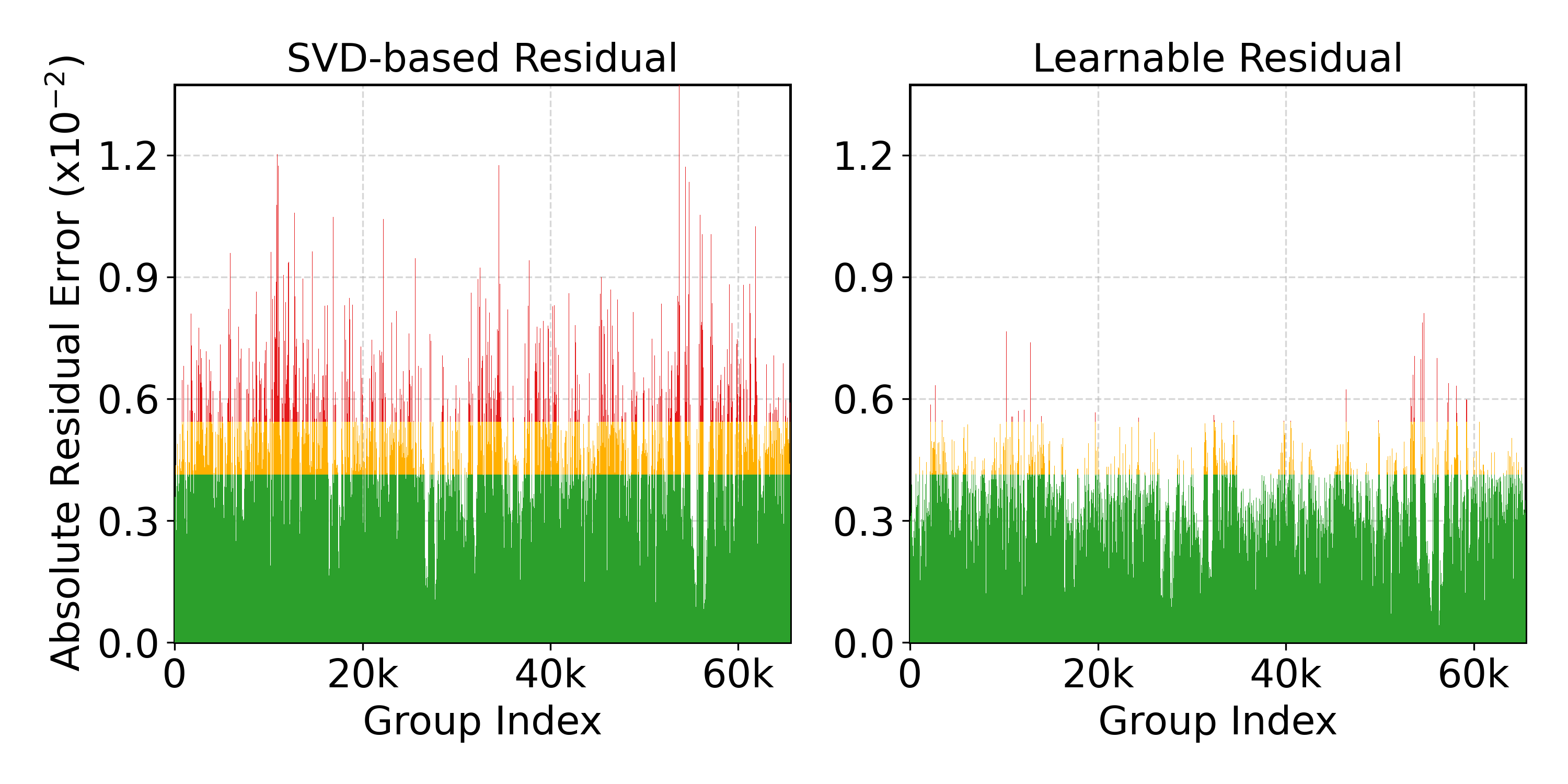}
\caption{Comparison of SVD-based and learnable residual quantization error.}
\label{fig:rotated_lowrank}
\end{subfigure}
\caption{Effect of learnable subspace decomposition on component-wise quantization error in 29$^{th}$ layer on LLaMA3-8B. Red denotes large errors, while green denotes small ones.}
\label{fig:error_analysis}
\end{figure*}

As shown in \autoref{ablation_accuracy}, we conduct ablation studies on LLaMA3-8B and Qwen3-8B. First, Naive 4-bit directly quantizes the smoothed weights and activations to W4A4 without decomposition, which leads to severe degradation and shows that W4A4 remains highly sensitive to residual outliers and dynamic-range mismatch. Adding an SVD-based low-rank decomposition, where both the low-rank and residual branches are quantized to 4 bits, substantially recovers accuracy. For example, on LLaMA3-8B, the zero-shot average improves from 41.7 to 61.1, and WikiText2 perplexity drops from 91.6 to 19.6. The +Hadamard variant keeps the same decomposed structure but replaces TwinQuant’s learnable transforms with fixed Hadamard transforms, further improves performance (+3.8 and +3.2 points on the zero-shot average). Finally, TwinQuant achieves the best results by using learnable global orthogonal transforms and layer-specific invertible transforms, reaching 70.9 and 70.2 average accuracy as well as 9.4 and 13.2 perplexity on the two models. These results show that TwinQuant benefits from both low-rank decomposition and learnable component-specific transformations.

\subsection{Quantization Error Analysis}

\begin{figure}[t]
  \centering
  \includegraphics[width=0.48\textwidth]{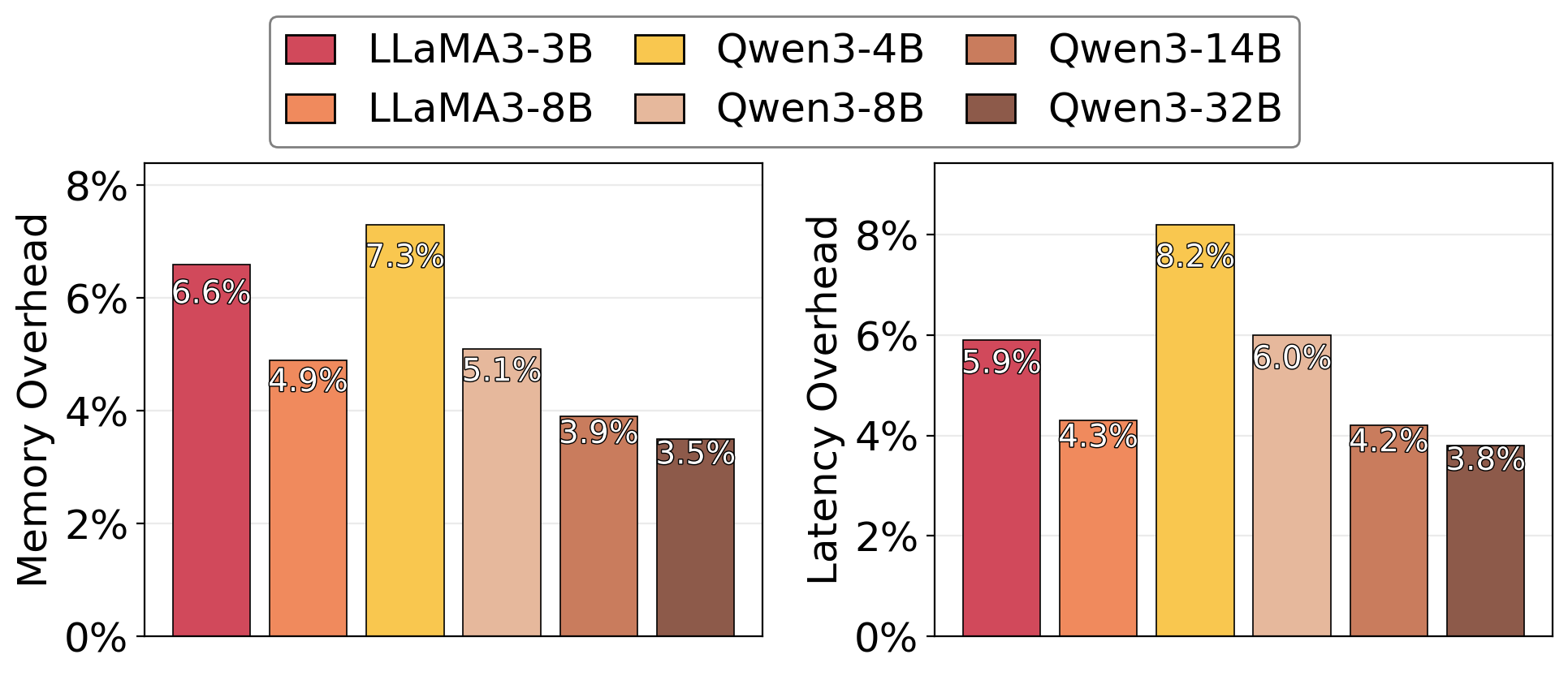}
  \caption{Memory and Latency overhead of low-rank component across six models under rank=128 in Environment 1.}
  \label{fig:overhead}
\end{figure}

\autoref{fig:error_analysis} compares the quantization error of the low-rank and residual components under SVD-based decomposition and our learnable subspace decomposition. Across both components, the learnable variant consistently suppresses the long-tail error spikes and reduces the overall error, indicating that it reshapes the parameter distribution into a form that is more amenable to 4-bit quantization. For the low-rank branch, the improvement comes from two sources: a learnable global rotation 
$\mathbf{Q}$ that aligns the weight space to reduce outlier concentration, and a layer-specific transform $\mathbf{G}$ that adapts the low-rank factors to each layer’s statistics, making the rank subspace itself easier to quantize. In contrast, the residual branch does not involve the layer-specific 
$\mathbf{G}$; its error reduction mainly stems from the shared global rotation $\mathbf{Q}$, which smooths the residual distribution and mitigates the remaining outliers after low-rank extraction. Overall, these results suggest that learning the decomposition subspace provides an effective degree of freedom to jointly lower quantization error in both components.

\subsection{Overhead of Low-Rank Component}
As shown in \autoref{fig:overhead}, the low-rank component in TwinQuant introduces only a limited overhead in both memory and latency across all six models. The memory overhead stays within 3.5–7.3\%, as the additional low-rank component is also quantized to 4-bit, so the cost is dominated by a small 
$r(m+n)$ term rather than storing an extra high-precision branch. We also observe a clear trend that larger models incur smaller overhead (e.g., Qwen3-32B at 3.5\% vs. Qwen3-4B at 7.3\%), consistent with the fact that a fixed rank becomes relatively cheaper as layer dimensions grow. On the runtime side, the latency overhead remains similarly small (3.8–8.2\%), indicating that the extra low-rank computation does not translate into proportional slowdown. This is largely enabled by our dual-component kernel design, which co-schedules the residual and low-rank paths to minimize kernel launches and avoid additional memory traffic, keeping the low-rank update lightweight in the end-to-end inference pipeline.

\section{Limitations and Future Work}
While TwinQuant demonstrates strong accuracy and efficiency on mainstream dense LLMs, several limitations also suggest future directions. First, our current evaluation mainly focuses on dense language models; extending TwinQuant to MoE architectures and multimodal models requires further study, as expert routing and cross-modal modules may introduce different activation and weight distributions. Second, the fused dual-component kernel is optimized for the evaluated GPU platforms, and broader deployment on other accelerator architectures may require additional kernel adaptation and tuning. Third, although our experiments cover representative inference settings, TwinQuant’s effectiveness and system behavior under ultra-long-context workloads require further validation. Finally, the learnable transforms rely on calibration data and introduce additional offline optimization time, motivating future work on more lightweight or data-efficient optimization.

\section{Conclusion}
In this work, we present TwinQuant, a 4-bit PTQ framework that improves low-precision LLM inference throughput without sacrificing accuracy. TwinQuant introduces a learnable subspace decomposition that splits each weight into a low-rank and a residual component, and learns component-specific transformations via a joint optimization scheme over the Stiefel and general linear manifolds to reduce their dynamic range and quantization error.  To keep dual-component inference efficient, we develop a customized fused 4-bit kernel that pipelines the two-stage low-rank path entirely on-chip and merges it with the residual path in a single epilogue and write-back. Across mainstream LLMs and GPUs, TwinQuant keeps 4-bit accuracy within 0.6–2.6 average-accuracy points of FP16 with up to 1.8$\times$ speedup.

\newpage
\section*{Impact Statement}
This paper presents work whose goal is to advance the field of Machine Learning. There are many potential societal consequences of our work, none of which we feel must be specifically highlighted here.

\section*{Acknowledgments}
This research was supported by funding from the Hong Kong RGC General Research Fund (152228/23E, 162161/24E, 162116/25E, 162180/25E), National Natural Science Foundation of China (NSFC) Key Program (No.62532005), Collaborative Research Fund (No. C1042-23GF, No. C5097-25G), NSFC/RGC Collaborative Research Scheme (Grant No. 62461160332 \& CRS\_HKUST602/24), Research Impact Fund (No. R5011-23F), Areas of Excellence Scheme (AoE/E-601/22-R), National Natural Science Foundation of China (No. U25B2002), Guangdong Basic and Applied Basic Research Foundation (No. 2023B1515120058), and the InnoHK (HKGAI). We also thank MetaX Integrated Circuits (Shanghai) Co., Ltd for supporting this research.

\bibliography{example_paper}
\bibliographystyle{icml2026}

\newpage
\appendix
\onecolumn
\section{Appendix.}

\subsection{Theoretical Analysis}\label{theorem_appendix}

\textbf{Derivation of the Expected Quantization Error Decomposition.}
For a weight tensor $\mathbf{W}$, the quantization error is defined as $\mathbf{E}_{\mathbf{W}} = \mathbf{W} - \mathcal{Q}(\mathbf{W})$.
Then for the low-rank component $\mathbf{\hat{W}}_{\mathrm{UV}} = \mathbf{UV}$, the quantization error can be approximated by neglecting the second-order noise term $\mathbf{E}_{\mathbf{U}}\mathbf{E}_{\mathbf{V}} \approx \mathbf{0}$:
\begin{equation*}
    \mathbf{E}_{\mathbf{\hat{W}}_{\mathrm{UV}}} \approx \mathcal{Q}(\mathbf{U})\mathcal{Q}(\mathbf{V}) - \mathbf{UV} = \mathbf{E}_{\mathbf{U}}\mathbf{V} + \mathbf{U}\mathbf{E}_{\mathbf{V}}.
\end{equation*}
The total reconstruction error matrix $\mathbf{\Delta}$ is defined as the difference between the full-precision output $\mathbf{\hat{Y}} = \mathbf{\hat{X}}(\mathbf{UV} + \mathbf{R})$ and the quantized computation:
\begin{equation*}
    \begin{aligned}
        \mathbf{\Delta} &= \mathbf{\hat{Y}} - \mathcal{Q}(\mathbf{\hat{X}}) \left[ \mathcal{Q}(\mathbf{U})\mathcal{Q}(\mathbf{V}) + \mathcal{Q}(\mathbf{R}) \right] \\
        &= \mathbf{\hat{X}}\mathbf{\hat{W}} - (\mathbf{\hat{X}} + \mathbf{E}_{\mathbf{\hat{X}}})(\mathbf{UV} + \mathbf{E}_{\mathbf{\hat{W}}_{\mathrm{UV}}} + \mathbf{R} + \mathbf{E}_{\mathbf{R}}) \\
        &= \mathbf{\hat{X}}\mathbf{\hat{W}} - (\mathbf{\hat{X}} + \mathbf{E}_{\mathbf{\hat{X}}})(\mathbf{\hat{W}} + \mathbf{E}_{\mathbf{\hat{W}}_{\mathrm{UV}}} + \mathbf{E}_{\mathbf{R}}),
    \end{aligned}
\end{equation*}
where $\mathbf{\hat{W}} = \mathbf{UV} + \mathbf{R}$ represents the total weight matrix. By expanding the product and neglecting higher-order noise terms (e.g., $\mathbf{E}_{\mathbf{\hat{X}}}\mathbf{E}_{\mathbf{W}_{\mathrm{UV}}} \approx \mathbf{0}$), we obtain the linearized error:
\begin{equation*}
    \mathbf{\Delta} \approx - \left( \mathbf{E}_{\mathbf{\hat{X}}}\mathbf{W} + \mathbf{\hat{X}}\mathbf{E}_{\mathbf{\hat{W}}_{\mathrm{UV}}} + \mathbf{\hat{X}}\mathbf{E}_{\mathbf{R}} \right).
\end{equation*}
Therefore, we have the expected squared Frobenius norm of $\mathbf{\Delta}$:
\begin{equation*}
    \mathbb{E}[\|\Delta\|_F^2] = \mathbb{E} \left[ \left\| \mathbf{E}_{\mathbf{\hat{X}}}\mathbf{\hat{W}} + \mathbf{\hat{X}}\mathbf{E}_{\mathbf{\hat{W}}_{\mathrm{UV}}} + \mathbf{\hat{X}}\mathbf{E}_{\mathbf{R}} \right\|_F^2 \right].
\end{equation*}
Expanding the squared norm of the sum of three matrices yields:
\begin{equation*}
    \begin{aligned}
        \mathbb{E}[\|\Delta\|_F^2] &\approx \mathbb{E}[\|\mathbf{E}_{\mathbf{\hat{X}}}\mathbf{\hat{W}}\|_F^2] + \mathbb{E}[\|\mathbf{\hat{X}}\mathbf{E}_{\mathbf{\hat{W}}_{\mathrm{UV}}}\|_F^2] + \mathbb{E}[\|\mathbf{\hat{X}}\mathbf{E}_{\mathbf{R}}\|_F^2] \\
        &\quad + 2\mathbb{E} \left[ \langle \mathbf{E}_{\mathbf{\hat{X}}}\mathbf{\hat{W}}, \mathbf{\hat{X}}\mathbf{E}_{\mathbf{\hat{W}}_{\mathrm{UV}}} \rangle_F + \langle \mathbf{E}_{\mathbf{\hat{X}}}\mathbf{\hat{W}}, \mathbf{\hat{X}}\mathbf{E}_{\mathbf{R}} \rangle_F + \langle \mathbf{\hat{X}}\mathbf{E}_{\mathbf{\hat{W}}_{\mathrm{UV}}}, \mathbf{\hat{X}}\mathbf{E}_{\mathbf{R}} \rangle_F \right].
    \end{aligned}
\end{equation*}
Assuming that all quantization noises ($\mathbf{E}_{\mathbf{\hat{X}}}, \mathbf{E}_{\mathbf{U}}, \mathbf{E}_{\mathbf{V}}, \mathbf{E}_{\mathbf{R}}$) are mutually independent and have zero mean ($\mathbb{E}[\mathbf{E}] = \mathbf{0}$), the expectation of the cross-product terms vanishes. For instance:
\begin{equation}
    \mathbb{E} \left[ \langle \mathbf{E}_{\mathbf{\hat{X}}}\mathbf{W}, \mathbf{\hat{X}}\mathbf{E}_{\mathbf{R}} \rangle_F \right] = \text{Tr} \left( \mathbf{W}^\top \mathbb{E}[\mathbf{E}_{\mathbf{\hat{X}}}]^\top \mathbf{\hat{X}} \mathbb{E}[\mathbf{E}_{\mathbf{R}}] \right) = 0.
\end{equation}
Since all such cross-terms involve at least one noise matrix that is independent of the others, their expectations are all zero. Hence, 
by retaining only the non-zero terms, the expected squared error is decomposed as:
\begin{equation}
\begin{aligned}
\mathbb{E}\!\left[\left\lVert \Delta \right\rVert_F^2\right]
&\approx
\underbrace{\mathbb{E}\!\left[\left\lVert \mathbf{E}_{\mathbf{\hat{X}}}\mathbf{W} \right\rVert_F^2\right]}_{\text{Activation Error}}
+
\underbrace{\mathbb{E}\!\left[\left\lVert \mathbf{\hat{X}}\,\mathbf{E}_{\mathbf{W}_{\mathrm{UV}}} \right\rVert_F^2\right]}_{\text{Low-Rank Weight Error}}
+
\underbrace{\mathbb{E}\!\left[\left\lVert \hat{\mathbf{X}}\,\mathbf{E}_{\mathbf{R}} \right\rVert_F^2\right]}_{\text{Residual Weight Error}}.
\end{aligned}
\end{equation}
where $\mathbf{E}_{\mathbf{W}_{\mathrm{UV}}} = \mathbf{E}_{\mathbf{U}}\mathbf{V} + \mathbf{U}\mathbf{E}_{\mathbf{V}}$ represents the error contributed by the low-rank factors.

\textbf{Proof of Theorem 1.}
\begin{proof}
For a weight tensor $\mathbf{W}$, the quantization error is defined as $\mathbf{E}_{\mathbf{W}} = \mathbf{W} - \mathcal{Q}(\mathbf{W})$. We assume that the quantization noise $\mathbf{E}_{\mathbf{A}}$ is zero-mean and its variance satisfies $\mathbb{E}[\lVert \mathbf{E}_{\mathbf{W}} \rVert_F^2] \propto \mathbb{E}[\lVert \mathbf{W} \rVert_{\infty}^2]$, where the proportionality constant depends on the matrix size and $q_{\max}$. From \autoref{error_equation}, the expected squared Frobenius norm of a quantized layer is

\begin{equation*}
\mathbb{E}\!\left[\left\lVert \Delta \right\rVert_F^2\right]
\approx
\underbrace{\mathbb{E}\!\left[\left\lVert \mathbf{E}_{\mathbf{\hat{X}}}\mathbf{\hat{W}} \right\rVert_F^2\right]}_{\text{Activation Error}}
+
\underbrace{\mathbb{E}\!\left[\left\lVert \hat{\mathbf{X}}\,\mathbf{E}_{\mathbf{\hat{W}}_\mathbf{\mathrm{UV}}} \right\rVert_F^2\right]}_{\text{Low-Rank Weight Error}}
\quad+
\underbrace{\mathbb{E}\!\left[\left\lVert \hat{\mathbf{X}}\,\mathbf{E}_{\mathbf{R}} \right\rVert_F^2\right]}_{\text{ Residual Weight Error}} = \mathcal{E}_{act} + \mathcal{E}_{LR} + \mathcal{E}_{res}.
\end{equation*}
where $\mathbf{\hat{W}}_\mathbf{\mathrm{UV}} = \mathbf{E}_{\mathbf{U}}\mathbf{V} + \mathbf{U}\mathbf{E}_{\mathbf{V}}$. 
Therefore, in the re-parameterized space, we still have the expected squared Frobenius norm:

\begin{equation*}
\mathbb{E}\!\left[\left\lVert \Delta' \right\rVert_F^2\right]
\approx
\underbrace{\mathbb{E}\!\left[\left\lVert \mathbf{E}_{\mathbf{\hat{X}Q}}\mathbf{\hat{W}} \right\rVert_F^2\right]}_{\text{Activation Error}}
+
\underbrace{\mathbb{E}\!\left[\left\lVert \hat{\mathbf{X}\mathbf{Q}}\,\mathbf{E}_{\mathbf{\hat{W}}_\mathbf{\mathrm{U'V'}}} \right\rVert_F^2\right]}_{\text{Low-Rank Weight Error}}
\quad+
\underbrace{\mathbb{E}\!\left[\left\lVert \hat{\mathbf{X}}\mathbf{Q}\,\mathbf{E}_{\mathbf{R'}} \right\rVert_F^2\right]}_{\text{ Residual Weight Error}}, 
\end{equation*}
where $\mathbf{U'}=\mathbf{Q}^{-1}\mathbf{U}\mathbf{G}$, $\mathbf{V'}=\mathbf{G}^{-1}\mathbf{V}$ and $\mathbf{R'}=\mathbf{Q}^{-1}\mathbf{R}$, then the activation flattening gain $\zeta(\mathbf{Q},\hat{\mathbf{X}})$ and low-rank re-parameterization gain $\eta(\mathbf{G},\mathbf{Q})$ can be defined as inverse gains:
\begin{equation*}
\zeta(\mathbf{Q},\hat{\mathbf{X}})
\triangleq
\frac{\mathbb{E}\left[\lVert \hat{\mathbf{X}} \rVert_{\infty}^{2}\right]}
{\mathbb{E}\left[\lVert \hat{\mathbf{X}}\mathbf{Q} \rVert_{\infty}^{2}\right]}, \quad
\eta(\mathbf{G},\mathbf{Q})
\triangleq
\frac{
\mathbb{E}\left[
\lVert \mathbf{U} \rVert_{\infty}^{2}, \lVert \mathbf{V} \rVert_{F}^{2}
+
\lVert \mathbf{V} \rVert_{\infty}^{2}, \lVert \mathbf{U} \rVert_{F}^{2}
\right]
}{
\mathbb{E}\left[
\lVert \mathbf{U}' \rVert_{\infty}^{2}, \lVert \mathbf{V}' \rVert_{F}^{2}
+
\lVert \mathbf{V}' \rVert_{\infty}^{2}, \lVert \mathbf{U}' \rVert_{F}^{2}
\right]
}.
\end{equation*}
For simplicity, we denote $\zeta(\mathbf{Q},\hat{\mathbf{X}})$ and $\eta(\mathbf{G},\mathbf{Q})$ by $\zeta$ and $\eta$, respectively. Under these inverse-gain definitions, larger $\zeta$ and $\eta$ indicate stronger reductions in the corresponding quantization-error proxies. Therefore, the re-parameterized activation and low-rank terms satisfy
\begin{equation*}
\mathcal{E}'_{act}
\le
\frac{\mathcal{E}_{act}}{\zeta},
\qquad
\mathcal{E}'_{LR}
\le
\frac{\mathcal{E}_{LR}}{\eta}.
\end{equation*}
Since the residual component is also transformed by the same global orthogonal matrix $\mathbf{Q}$, we use the same $\mathbf{Q}$-induced flattening proxy for the residual branch:
\begin{equation*}
\mathcal{E}'_{res}
\le
\frac{\mathcal{E}_{res}}{\zeta}.
\end{equation*}
Thus, the expected squared error in the re-parameterized space can be bounded as
\begin{align*}
\mathbb{E}\left[\left\lVert \Delta' \right\rVert_F^2\right]
&\approx
\mathcal{E}'_{act}
+
\mathcal{E}'_{LR}
+
\mathcal{E}'_{res}
\
&\le
\frac{\mathcal{E}_{act}}{\zeta}
+
\frac{\mathcal{E}_{LR}}{\eta}
+
\frac{\mathcal{E}_{res}}{\zeta}
\
&\le
\frac{
\mathcal{E}_{act}
+
\mathcal{E}_{LR}
+
\mathcal{E}_{res}
}{
\min\left(\zeta,\eta\right)
}
\
&\approx
\frac{
\mathbb{E}\left[\left\lVert \Delta \right\rVert_F^2\right]
}{
\min\left(\zeta,\eta\right)
}.
\end{align*}
This completes the proof.
\end{proof}

\subsection{Baselines}\label{baseline_appendix}
We benchmark our methods compared with the following eight baselines:
\begin{itemize}[leftmargin=1.2em]
    \item RTN~\citep{RTN} employs uniform rounding-to-nearest with fixed scaling to quantize weights and activations without calibration, thereby offering a almost zero-cost baseline at the expense of larger accuracy loss.
    \item GPTQ~\citep{gptq} employs post-training, Hessian-aware greedy weight quantization while applying error compensation to previously quantized rows, thereby preserving accuracy under W4A16.
    \item AWQ~\citep{awq_2024} employs activation-aware weight clipping and per-channel scaling while prioritizing salient channels, thereby achieving accurate W4A16 quantization with light calibration.
    \item SmoothQuant~\citep{quant_smoothquant} employs channel-wise scaling to shift activation outliers into weights while smoothing activation distributions, thereby enabling W8A8 quantization with minimal degradation.
    \item QuaRot~\citep{quantization_quarot} employs online Hadamard-based orthogonal rotations, implemented with the Fast Hadamard Transform (FHT), to disperse outliers on the fly and quantize in the rotated space, thereby enabling W4A4 quantization with low overhead and small accuracy loss. 
    \item SpinQuant~\citep{spinquant_quant} employs learned orthogonal rotations to homogenize weight and activation magnitudes while reducing outliers, thereby enabling W4A4 quantization with improved stability. 
    \item SVDQuant~\citep{li2024svdquant} employs an SVD-based low-rank decomposition to isolate outlier-dominated components into an auxiliary low-rank branch and quantize the remaining residual branch, thereby reducing 4-bit quantization error with additional low-rank computation and storage. 
    \item FlatQuant~\citep{flatquant} employs flattening-based outlier suppression via learnable per-channel affine rescaling to equalize dynamic ranges across channels, thereby reducing quantization error and enabling more accurate low-bit quantization, e.g., W4A4/W4A8, with lightweight calibration. 
    \end{itemize}

\subsection{Zero-shot Accuracy}\label{zero_shot_appendix}

Table~\ref{llama3} reports WikiText2 perplexity and averaged accuracy over six zero-shot benchmarks for LLaMA3-3B and LLaMA3-8B. Under W4A8, TwinQuant remains competitive with the strongest baselines, achieving 65.4/72.8 average accuracy on 3B/8B with modest perplexity increases (10.8 points and 8.9 points, respectively). Notably, several PTQ methods become unstable once activations are quantized: SmoothQuant exhibits severe degradation on 3B (Wiki2 perplexity 288.5), while RTN degrades substantially, indicating sensitivity to activation quantization noise. Under the more challenging W4A4 regime, TwinQuant is consistently robust and achieves the best overall trade-off. On LLaMA3-3B, TwinQuant reaches 65.6 average accuracy, outperforming SpinQuant (63.6 points) and matching FlatQuant (65.5 points), while avoiding the catastrophic failures of RTN/SmoothQuant (perplexity 741.9 and 372.3 points). On LLaMA3-8B, TwinQuant attains 70.9 average accuracy with 9.4 perplexity, improving over SpinQuant (69.4 and 9.5 points) and substantially narrowing the gap to full precision (73.5 and 8.64 points). Overall, TwinQuant delivers stable performance across model scales and quantization settings, especially under fully 4-bit quantization.

Table~\ref{qwen3} summarizes WikiText2 perplexity and averaged zero-shot accuracy on four Qwen3 model sizes. Under W4A8, TwinQuant consistently matches or improves upon prior rotation-based baselines and remains among the top-performing PTQ methods. For instance, on Qwen3-4B it achieves 65.6 average accuracy with 18.3 perplexity, improving over SpinQuant (65.1 and 19.8 points) and QuaRot (60.9 and 16.7 points). On larger models, TwinQuant remains competitive (e.g., 74.3 average accuracy on Qwen3-32B) while keeping perplexity in a stable range. Under the harder W4A4 setting, RTN and SmoothQuant largely collapse on Qwen models, exhibiting extremely large perplexity (often in the thousands) and near-random accuracies, indicating severe instability under fully 4-bit quantization. In contrast, TwinQuant remains robust across all scales: it reaches 65.0/70.2/72.8/73.3 average accuracy on 4B/8B/14B/32B, with reasonable perplexity values (18.6/13.2/11.8/10.4). Compared with SpinQuant, TwinQuant improves average accuracy by 1.9/1.3 points on 8B/32B and reduces Wiki2 perplexity notably on 32B (12.1 ($\rightarrow$) 10.4 points). Overall, TwinQuant delivers stable W4A4 performance on Qwen3, consistently narrowing the gap to full precision while avoiding the failure modes of standard PTQ baselines.

\begin{table*}[t]
    \setlength{\tabcolsep}{3.4pt}
    \caption{
        Complete comparison of the perplexity score on WikiText2 and averaged accuracy on six zero-shot tasks on LLaMA3 3B and 8B.
    }
    \label{llama3}
    \scriptsize \centering
    \begin{tabular}{ccccccccccc}
    \toprule
     \textbf{Model} & \textbf{Precision} & \textbf{Method} & \textbf{ARC-C} & \textbf{ARC-E} & \textbf{HellaSwag} & \textbf{PIQA} & \textbf{Winogrande} & \textbf{LAMBADA} & \textbf{Avg.} & \textbf{Wiki} ($\downarrow$)\\

    \midrule
    \multirow{17}{*}{\makecell{3B}}
    & W16A16 & -- & 47.6 & 69.9 & 71.0 & 76.0 & 66.6 & 65.9 & 66.2 & 10.7 \\
    \cmidrule{2-11}
    & W4A16 & RTN & 41.3 & 60.2 & 66.2 & 73.1 & 63.0 & 61.6 & 60.9 & 18.8 \\
    & W4A16 & AWQ & 43.5 & 66.7 & 65.8 & 75.8 & 62.7 & 63.8 & 63.1 & 12.7 \\
    & W4A16 & GPTQ & 41.4 & 60.8 & 65.9 & 73.6 & 65.0 & 63.4 & 61.7 & 15.2 \\
    \cmidrule{2-11}
    & W4A8 & RTN & 42.6 & 60.2 & 66.2 & 72.6 & 62.7 & 60.1 & 60.7 & 29.0 \\
    & W4A8 & SpinQuant & 47.2 & 66.8 & 69.2 & 76.0 & 66.7 & 63.2 & 64.9 & 11.5 \\
    & W4A8 & QuaRot & 42.9 & 64.3 & 68.1 & 72.6 & 64.8 & 61.2 & 62.3 & 12.4 \\
    & W4A8 & SmoothQuant & 40.7 & 59.8 & 65.5 & 73.8 & 58.5 & 60.7 & 59.8 & 288.5 \\
    & W4A8 & FlatQuant & 45.1 & 70.7 & 72.2 & 77.0 & 68.1 & 68.2 & 66.9 & 10.7 \\
    & \cellcolor{gray}W4A8 & \cellcolor{gray}TwinQuant & \cellcolor{gray}46.9 & \cellcolor{gray}69.6 & \cellcolor{gray}70.2 & \cellcolor{gray}75.2 & \cellcolor{gray}65.9 & \cellcolor{gray}64.8 & \cellcolor{gray}65.4 & \cellcolor{gray}10.8 \\
    \cmidrule{2-11}
    & W4A4 & RTN & 29.8 & 41.0 & 41.4 & 57.3 & 50.9 & 38.9 & 43.2 & 741.9 \\
    & W4A4 & SpinQuant & 43.9 & 66.3 & 67.2 & 75.0 & 65.5 & 63.5 & 63.6 & 11.9 \\
    & W4A4 & QuaRot & 40.6 & 49.9 & 51.2 & 61.9 & 61.4 & 55.2 & 53.4 & 26.9 \\
    & W4A4 & SmoothQuant & 30.5 & 43.6 & 37.7 & 58.0 & 52.9 & 45.3 & 44.7 & 372.3 \\
    & W4A4 & FlatQuant & 41.9 & 69.4 & 70.8 & 76.1 & 66.4 & 68.2 & 65.5 & 11.3 \\
    & W4A4 & SVDQuant & 44.3 & 68.5 & 70.1 & 75.3 & 64.7 & 65.0 & 64.7 & 11.7 \\
    & \cellcolor{gray}W4A4 & \cellcolor{gray}TwinQuant & \cellcolor{gray}45.2 & \cellcolor{gray}69.8 & \cellcolor{gray}70.9 & \cellcolor{gray}76.2 & \cellcolor{gray}66.5 & \cellcolor{gray}65.1 & \cellcolor{gray}65.6 & \cellcolor{gray}11.1 \\

    \midrule
    \multirow{17}{*}{\makecell{8B}}
    & W16A16 & -- & 54.86 & 79.55 & 79.13 & 80.74 & 73.72 & 72.93 & 73.5 & 8.64 \\
    \cmidrule{2-11}
    & W4A16 & RTN & 52.5 & 77.7 & 77.4 & 79.6 & 73.4 & 66.7 & 71.2 & 10.5 \\
    & W4A16 & AWQ & 53.8 & 77.6 & 78.8 & 80.0 & 73.2 & 72.1 & 72.6 & 10.3 \\
    & W4A16 & GPTQ & 54.2 & 79.4 & 78.1 & 80.1 & 72.1 & 70.6 & 72.4 & 9.0 \\
    \cmidrule{2-11}
    & W4A8 & RTN & 52.7 & 77.0 & 77.3 & 79.4 & 73.0 & 66.3 & 71.0 & 10.6 \\
    & W4A8 & SpinQuant & 54.0 & 76.5 & 78.1 & 79.6 & 72.4 & 71.3 & 72.0 & 9.0 \\
    & W4A8 & QuaRot & 52.4 & 77.0 & 77.1 & 79.7 & 71.0 & 66.5 & 70.6 & 11.0 \\
    & W4A8 & SmoothQuant & 42.5 & 66.2 & 69.6 & 73.72 & 66.1 & 57.5 & 60.4 & 13.3 \\
    & W4A8 & FlatQuant & 54.2 & 78.3 & 77.5 & 80.2 & 72.4 & 72.3 & 72.5 & 9.8 \\
    & \cellcolor{gray}W4A8 & \cellcolor{gray}TwinQuant & \cellcolor{gray}54.3 & \cellcolor{gray}79.3 & \cellcolor{gray}78.6 & \cellcolor{gray}79.4 & \cellcolor{gray}72.1 & \cellcolor{gray}72.8 & \cellcolor{gray}72.8 & \cellcolor{gray}8.9 \\
    \cmidrule{2-11}
    & W4A4 & RTN & 28.3 & 40.2 & 43.8 & 59.1 & 50.0 & 21.0 & 40.4 & 92.9 \\
    & W4A4 & SpinQuant & 50.9 & 75.0 & 75.9 & 77.5 & 68.5 & 68.4 & 69.4 & 9.5 \\
    & W4A4 & QuaRot & 49.2 & 73.4 & 74.1 & 77.4 & 68.3 & 61.8 & 67.4 & 13.1 \\
    & W4A4 & SmoothQuant & 35.0 & 57.3 & 59.8 & 67.9 & 56.8 & 40.2 & 52.8 & 19.4 \\
    & W4A4 & FlatQuant & 51.2 & 75.3 & 75.6 & 78.3 & 69.9 & 70.1 & 70.1 & 11.0 \\
    & W4A4 & SVDQuant & 50.3 & 74.1 & 74.0 & 76.2 & 67.5 & 67.3 & 68.2 & 12.5 \\
    & \cellcolor{gray}W4A4 & \cellcolor{gray}TwinQuant & \cellcolor{gray}52.73 & \cellcolor{gray}76.01 & \cellcolor{gray}76.25 & \cellcolor{gray}78.78 & \cellcolor{gray}71.5 & \cellcolor{gray}69.9 & \cellcolor{gray}70.9 & \cellcolor{gray}9.4 \\
    \bottomrule
    \end{tabular}
\end{table*}

\begin{table*}[!t]
    \setlength{\tabcolsep}{3.4pt}
    \caption{
        Complete comparison of the perplexity score on WikiText2 and averaged accuracy on six zero-shot tasks on Qwen3 4B/8B/14B/32B.
    }
    \label{qwen3}
    \scriptsize \centering
    \begin{tabular}{ccccccccccc}
    \toprule
     \textbf{Model} & \textbf{Precision} & \textbf{Method} & \textbf{ARC-C} & \textbf{ARC-E} & \textbf{HellaSwag} & \textbf{PIQA} & \textbf{Winogrande} & \textbf{LAMBADA} & \textbf{Avg.} & \textbf{Wiki} ($\downarrow$)\\

    \midrule
    \multirow{17}{*}{\makecell{4B}}
    & W16A16 & -- & 50.6 & 80.5 & 69.5 & 75.0 & 65.8 & 59.4 & 66.8 & 13.7 \\
    \cmidrule{2-11}
    & W4A16 & RTN & 44.7 & 75.3 & 66.8 & 73.2 & 61.8 & 57.9 & 63.3 & 17.6 \\
    & W4A16 & AWQ & 46.2 & 76.7 & 67.8 & 73.8 & 63.1 & 58.5 & 64.4 & 16.6 \\
    & W4A16 & GPTQ & 45.4 & 76.6 & 67.0 & 74.4 & 63.9 & 58.2 & 64.3 & 14.5 \\
    \cmidrule{2-11}
    & W4A8 & RTN & 42.3 & 68.7 & 64.1 & 68.8 & 53.9 & 55.1 & 58.8 & 30.6 \\
    & W4A8 & SpinQuant & 49.2 & 78.4 & 67.2 & 73.5 & 63.6 & 58.4 & 65.1 & 19.8 \\
    & W4A8 & QuaRot & 43.3 & 70.5 & 66.3 & 71.7 & 57.9 & 55.4 & 60.9 & 16.7 \\
    & W4A8 & SmoothQuant & 42.2 & 69.5 & 63.4 & 71.5 & 57.5 & 55.0 & 59.9 & 22.6 \\
    & W4A8 & FlatQuant & 50.9 & 73.6 & 66.8 & 74.3 & 63.3 & 57.6 & 64.4 & 19.4 \\
    & \cellcolor{gray}W4A8 & \cellcolor{gray}TwinQuant & \cellcolor{gray}50.2 & \cellcolor{gray}78.9 & \cellcolor{gray}66.9 & \cellcolor{gray}74.5 & \cellcolor{gray}64.2 & \cellcolor{gray}58.6 & \cellcolor{gray}65.6 & \cellcolor{gray}18.3 \\
    \cmidrule{2-11}
    & W4A4 & RTN & 26.7 & 28.6 & 44.2 & 48.9 & 51.8 & 48.9 & 41.5 & 8791 \\
    & W4A4 & SpinQuant & 47.4 & 76.8 & 66.5 & 72.8 & 63.5 & 56.7 & 64.0 & 21.7 \\
    & W4A4 & QuaRot & 39.8 & 65.7 & 64.4 & 62.5 & 54.6 & 52.4 & 56.6 & 21.5 \\
    & W4A4 & SmoothQuant & 22.6 & 25.9 & 42.3 & 51.6 & 47.9 & 50.2 & 40.1 & 9910 \\
    & W4A4 & FlatQuant & 49.4 & 75.3 & 65.5 & 74.0 & 64.0 & 56.2 & 64.1 & 20.0 \\
    & W4A4 & SVDQuant & 49.2 & 75.5 & 66.2 & 73.3 & 63.7 & 55.2 & 63.9 & 20.5 \\
    & \cellcolor{gray}W4A4 & \cellcolor{gray}TwinQuant & \cellcolor{gray}49.4 & \cellcolor{gray}77.8 & \cellcolor{gray}67.0 & \cellcolor{gray}73.9 & \cellcolor{gray}64.7 & \cellcolor{gray}57.0 & \cellcolor{gray}65.0 & \cellcolor{gray}18.6 \\

    \midrule
    \multirow{17}{*}{\makecell{8B}}
    & W16A16 & -- & 55.5 & 83.5 & 78.8 & 76.4 & 68.0 & 67.4 & 71.6 & 9.71 \\
    \cmidrule{2-11}
    & W4A16 & RTN & 53.8 & 79.1 & 76.3 & 75.4 & 63.6 & 60.6 & 68.1 & 12.0 \\
    & W4A16 & AWQ & 59.7 & 83.1 & 77.1 & 79.4 & 72.0 & 62.7 & 72.3 & 10.5 \\
    & W4A16 & GPTQ & 55.6 & 80.1 & 77.8 & 76.0 & 66.1 & 62.8 & 69.7 & 10.3 \\
    \cmidrule{2-11}
    & W4A8 & RTN & 45.8 & 73.4 & 72.3 & 73.1 & 62.9 & 58.7 & 64.4 & 12.3 \\
    & W4A8 & SpinQuant & 54.2 & 81.9 & 75.5 & 75.6 & 66.5 & 65.8 & 69.9 & 12.4 \\
    & W4A8 & QuaRot & 53.9 & 77.8 & 74.8 & 72.7 & 62.6 & 60.5 & 67.1 & 12.9 \\
    & W4A8 & SmoothQuant & 44.3 & 71.0 & 71.7 & 73.4 & 62.1 & 60.0 & 63.8 & 12.5 \\
    & W4A8 & FlatQuant & 56.8 & 80.6 & 74.2 & 77.1 & 67.3 & 63.1 & 69.9 & 12.0 \\
    & \cellcolor{gray}W4A8 & \cellcolor{gray}TwinQuant & \cellcolor{gray}54.6 & \cellcolor{gray}82.5 & \cellcolor{gray}77.7 & \cellcolor{gray}76.2 & \cellcolor{gray}67.9 & \cellcolor{gray}66.1 & \cellcolor{gray}70.8 & \cellcolor{gray}11.5 \\
    \cmidrule{2-11}
    & W4A4 & RTN & 22.6 & 24.9 & 45.6 & 48.9 & 51.8 & 46.9 & 40.1 & 4392 \\
    & W4A4 & SpinQuant & 53.6 & 78.5 & 71.4 & 76.5 & 67.3 & 62.4 & 68.3 & 14.8 \\
    & W4A4 & QuaRot & 49.8 & 74.8 & 69.8 & 68.3 & 60.7 & 57.1 & 63.4 & 24.5 \\
    & W4A4 & SmoothQuant & 25.7 & 25.5 & 41.7 & 50.5 & 52.2 & 44.9 & 40.1 & 3360.1 \\
    & W4A4 & FlatQuant & 54.4 & 79.4 & 73.1 & 77.0 & 68.8 & 63.4 & 69.3 & 13.4 \\
    & W4A4 & SVDQuant & 52.8 & 78.9 & 74.4 & 75.8 & 66.6 & 64.1 & 68.8 & 14.9 \\
    & \cellcolor{gray}W4A4 & \cellcolor{gray}TwinQuant & \cellcolor{gray}53.6 & \cellcolor{gray}80.8 & \cellcolor{gray}77.1 & \cellcolor{gray}75.7 & \cellcolor{gray}67.9 & \cellcolor{gray}65.8 & \cellcolor{gray}70.2 & \cellcolor{gray}13.2 \\

    \midrule
    \multirow{17}{*}{\makecell{14B}}
    & W16A16 & -- & 59.0 & 84.3 & 80.5 & 80.0 & 72.9 & 68.4 & 74.2 & 8.6 \\
    \cmidrule{2-11}
    & W4A16 & RTN & 51.8 & 80.6 & 78.5 & 77.9 & 68.7 & 64.4 & 70.3 & 9.9 \\
    & W4A16 & AWQ & 51.7 & 79.2 & 79.3 & 76.0 & 66.1 & 66.5 & 69.8 & 9.6 \\
    & W4A16 & GPTQ & 57.1 & 81.9 & 78.8 & 78.8 & 72.5 & 66.9 & 72.7 & 9.2 \\
    \cmidrule{2-11}
    & W4A8 & RTN & 50.7 & 79.5 & 75.4 & 75.3 & 66.9 & 62.0 & 68.3 & 11.9 \\
    & W4A8 & SpinQuant & 57.8 & 81.8 & 77.6 & 78.3 & 71.8 & 67.6 & 72.5 & 11.0 \\
    & W4A8 & QuaRot & 56.4 & 80.7 & 76.1 & 76.9 & 68.5 & 65.8 & 70.7 & 11.4 \\
    & W4A8 & SmoothQuant & 50.5 & 79.0 & 75.8 & 75.8 & 66.3 & 61.3 & 68.1 & 11.8 \\
    & W4A8 & FlatQuant & 59.3 & 82.0 & 78.1 & 79.8 & 73.4 & 67.6 & 73.4 & 10.6 \\
    & \cellcolor{gray}W4A8 & \cellcolor{gray}TwinQuant & \cellcolor{gray}58.4 & \cellcolor{gray}83.0 & \cellcolor{gray}79.5 & \cellcolor{gray}79.4 & \cellcolor{gray}72.4 & \cellcolor{gray}68.0 & \cellcolor{gray}73.5 & \cellcolor{gray}10.5 \\
    \cmidrule{2-11}
    & W4A4 & RTN & 24.8 & 23.9 & 50.6 & 55.7 & 46.8 & 50.7 & 42.1 & 18749 \\
    & W4A4 & SpinQuant & 56.3 & 81.0 & 75.8 & 76.9 & 71.4 & 65.7 & 71.2 & 13.0 \\
    & W4A4 & QuaRot & 52.8 & 76.4 & 72.4 & 73.6 & 65.0 & 62.8 & 67.2 & 18.2 \\
    & W4A4 & SmoothQuant & 26.5 & 25.8 & 48.7 & 51.2 & 50.2 & 45.3 & 41.3 & 21675 \\
    & W4A4 & FlatQuant & 60.4 & 81.6 & 77.6 & 79.6 & 72.6 & 66.9 & 73.1 & 11.4 \\
    & W4A4 & SVDQuant & 56.8 & 80.4 & 76.7 & 78.4 & 71.4 & 64.9 & 71.4 & 12.8 \\
    & \cellcolor{gray}W4A4 & \cellcolor{gray}TwinQuant & \cellcolor{gray}58.0 & \cellcolor{gray}82.1 & \cellcolor{gray}78.6 & \cellcolor{gray}79.0 & \cellcolor{gray}72.8 & \cellcolor{gray}66.1 & \cellcolor{gray}72.8 & \cellcolor{gray}11.8 \\

    \midrule
    \multirow{17}{*}{\makecell{32B}}
    & W16A16 & -- & 57.8 & 84.4 & 84.2 & 80.9 & 73.6 & 70.3 & 75.2 & 7.6 \\
    \cmidrule{2-11}
    & W4A16 & RTN & 49.7 & 73.0 & 82.6 & 71.9 & 62.9 & 68.5 & 68.1 & 12.7 \\
    & W4A16 & AWQ & 56.9 & 82.5 & 83.2 & 79.8 & 71.8 & 68.7 & 73.8 & 8.2 \\
    & W4A16 & GPTQ & 57.8 & 82.6 & 83.6 & 79.7 & 67.6 & 69.8 & 73.5 & 8.3 \\
    \cmidrule{2-11}
    & W4A8 & RTN & 49.5 & 76.3 & 78.6 & 72.9 & 65.3 & 62.0 & 67.4 & 13.6 \\
    & W4A8 & SpinQuant & 56.6 & 82.9 & 80.4 & 79.4 & 72.0 & 67.1 & 73.1 & 10.9 \\
    & W4A8 & QuaRot & 51.3 & 78.9 & 79.8 & 76.8 & 70.9 & 68.5 & 71.0 & 11.8 \\
    & W4A8 & SmoothQuant & 49.2 & 75.8 & 78.9 & 71.4 & 63.5 & 63.7 & 67.1 & 13.0 \\
    & W4A8 & FlatQuant & 58.7 & 82.9 & 82.2 & 81.6 & 73.0 & 67.8 & 74.4 & 9.8 \\
    & \cellcolor{gray}W4A8 & \cellcolor{gray}TwinQuant & \cellcolor{gray}56.8 & \cellcolor{gray}82.9 & \cellcolor{gray}83.5 & \cellcolor{gray}80.6 & \cellcolor{gray}73.2 & \cellcolor{gray}68.9 & \cellcolor{gray}74.3 & \cellcolor{gray}9.9 \\
    \cmidrule{2-11}
    & W4A4 & RTN & 28.5 & 27.6 & 60.8 & 52.5 & 50.7 & 52.0 & 45.4 & 1796 \\
    & W4A4 & SpinQuant & 56.7 & 80.6 & 79.2 & 78.9 & 71.8 & 64.8 & 72.0 & 12.1 \\
    & W4A4 & QuaRot & 49.8 & 72.2 & 75.6 & 73.4 & 67.9 & 65.3 & 67.4 & 15.6 \\
    & W4A4 & SmoothQuant & 26.0 & 28.9 & 56.7 & 51.9 & 54.1 & 48.6 & 44.4 & 1806 \\
    & W4A4 & FlatQuant & 57.5 & 81.2 & 80.6 & 80.2 & 71.5 & 65.9 & 72.8 & 11.0 \\
    & W4A4 & SVDQuant & 56.1 & 80.9 & 81.1 & 78.6 & 70.8 & 65.6 & 72.2 & 11.6 \\
    & \cellcolor{gray}W4A4 & \cellcolor{gray}TwinQuant & \cellcolor{gray}57.3 & \cellcolor{gray}82.4 & \cellcolor{gray}82.6 & \cellcolor{gray}79.8 & \cellcolor{gray}72.5 & \cellcolor{gray}65.2 & \cellcolor{gray}73.3 & \cellcolor{gray}10.4 \\
    \bottomrule
    \end{tabular}
\end{table*}

\subsection{Kernel Profile}\label{kernel_profile}

Table~\ref{tab:env1_latency} shows that kernel fusion consistently reduces both prefill and decode latency across attention and MLP projections on LLaMA3-8B. For attention layers, fusion yields about $1.8$--$2.0\times$ speedup for $q_{\mathrm{proj}}$ and $k/v_{\mathrm{proj}}$ in both phases, and the decode speedup remains close to $2.0\times$ across batch sizes. For MLP layers, fusion provides $1.23$--$1.91\times$ prefill and $1.42$--$2.00\times$ decode improvements, indicating that removing intermediate writes and kernel launches is especially beneficial for decode.

Table~\ref{tab:env2_latency} confirms similar trends under a different hardware/software setup. Fusion again delivers near $2\times$ improvements for attention projections (e.g., $q_{\mathrm{proj}}$: $1.87$--$2.00\times$ prefill and $1.91$--$1.99\times$ decode; $k/v_{\mathrm{proj}}$: up to $2.23\times$ decode). For MLP projections, prefill speedups are moderate ($1.22$--$1.93\times$), while decode remains consistently accelerated ($1.42$--$2.10\times$). Overall, fusion benefits persist across batch sizes and environments, demonstrating robust system-level gains.

\begin{table}[t]
\centering
\small
\setlength{\tabcolsep}{6pt}
\renewcommand{\arraystretch}{1.15}
\begin{tabular}{ccrrrrrr}
\toprule
\multirow{2}{*}{Layer Name} & \multirow{2}{*}{Batch Size}
& \multicolumn{2}{c}{without Kernel Fusion}
& \multicolumn{2}{c}{with Kernel Fusion}
& \multicolumn{2}{c}{Speedup} \\
\cmidrule(lr){3-4}\cmidrule(lr){5-6}\cmidrule(lr){7-8}
& & Prefill Time (ms) & Decode Time (ms)
& Prefill Time (ms) & Decode Time (ms)
& Prefill & Decode \\
\midrule
\multirow{4}{*}{\makecell{$q_{\text{proj}}$\\$4096\times4096$}}
& 1 & 0.214 & 0.226 & 0.116 & 0.113 & 1.84$\times$ & 2.00$\times$ \\
& 2 & 0.227 & 0.228 & 0.124 & 0.113 & 1.83$\times$ & 2.02$\times$ \\
& 4 & 0.295 & 0.228 & 0.173 & 0.112 & 1.71$\times$ & 2.04$\times$ \\
& 8 & 0.511 & 0.224 & 0.318 & 0.113 & 1.61$\times$ & 1.98$\times$ \\
\midrule
\multirow{4}{*}{\makecell{$k_{\text{proj}}$, $v_{\text{proj}}$\\$4096\times1024$}}
& 1 & 0.214 & 0.213 & 0.112 & 0.112 & 1.91$\times$ & 1.90$\times$ \\
& 2 & 0.217 & 0.218 & 0.114 & 0.112 & 1.90$\times$ & 1.95$\times$ \\
& 4 & 0.217 & 0.218 & 0.115 & 0.112 & 1.89$\times$ & 1.95$\times$ \\
& 8 & 0.238 & 0.218 & 0.127 & 0.112 & 1.87$\times$ & 1.95$\times$ \\
\midrule
\multirow{4}{*}{\makecell{$\text{up}_{\text{proj}}$, $\text{gate}_{\text{proj}}$\\$4096\times14336$}}
& 1 & 0.325 & 0.215 & 0.170 & 0.116 & 1.91$\times$ & 1.85$\times$ \\
& 2 & 0.394 & 0.223 & 0.269 & 0.117 & 1.46$\times$ & 1.91$\times$ \\
& 4 & 0.601 & 0.226 & 0.452 & 0.114 & 1.33$\times$ & 1.98$\times$ \\
& 8 & 0.987 & 0.226 & 0.801 & 0.113 & 1.23$\times$ & 2.00$\times$ \\
\midrule
\multirow{4}{*}{\makecell{$\text{down}_{\text{proj}}$\\$14336\times4096$}}
& 1 & 0.295 & 0.301 & 0.210 & 0.212 & 1.40$\times$ & 1.42$\times$ \\
& 2 & 0.323 & 0.303 & 0.247 & 0.201 & 1.31$\times$ & 1.51$\times$ \\
& 4 & 0.632 & 0.302 & 0.488 & 0.200 & 1.30$\times$ & 1.51$\times$ \\
& 8 & 1.167 & 0.306 & 0.907 & 0.209 & 1.29$\times$ & 1.46$\times$ \\
\bottomrule
\end{tabular}
\caption{Latency comparison with/without kernel fusion on LLaMA3-8B in Environment 1.}
\label{tab:env1_latency}
\end{table}

\begin{table}[t]
\centering
\small
\setlength{\tabcolsep}{6pt}
\renewcommand{\arraystretch}{1.15}
\begin{tabular}{ccrrrrrr}
\toprule
\multirow{2}{*}{Layer Name} & \multirow{2}{*}{Batch Size}
& \multicolumn{2}{c}{without Kernel Fusion}
& \multicolumn{2}{c}{with Kernel Fusion}
& \multicolumn{2}{c}{Speedup} \\
\cmidrule(lr){3-4}\cmidrule(lr){5-6}\cmidrule(lr){7-8}
& & Prefill Time (ms) & Decode Time (ms)
& Prefill Time (ms) & Decode Time (ms)
& Prefill & Decode \\
\midrule
\multirow{4}{*}{\makecell{$q_{\text{proj}}$\\$4096\times4096$}}
& 1 & 0.238 & 0.229 & 0.122 & 0.120 & 1.95$\times$ & 1.91$\times$ \\
& 2 & 0.375 & 0.238 & 0.187 & 0.120 & 2.00$\times$ & 1.99$\times$ \\
& 4 & 0.479 & 0.234 & 0.254 & 0.120 & 1.89$\times$ & 1.95$\times$ \\
& 8 & 0.875 & 0.234 & 0.467 & 0.120 & 1.87$\times$ & 1.95$\times$ \\
\midrule
\multirow{4}{*}{\makecell{$k_{\text{proj}}$, $v_{\text{proj}}$\\$4096\times1024$}}
& 1 & 0.222 & 0.265 & 0.119 & 0.119 & 1.86$\times$ & 2.23$\times$ \\
& 2 & 0.223 & 0.252 & 0.119 & 0.119 & 1.87$\times$ & 2.12$\times$ \\
& 4 & 0.214 & 0.253 & 0.122 & 0.119 & 1.76$\times$ & 2.13$\times$ \\
& 8 & 0.378 & 0.237 & 0.207 & 0.119 & 1.83$\times$ & 1.99$\times$ \\
\midrule
\multirow{4}{*}{\makecell{$\text{up}_{\text{proj}}$, $\text{gate}_{\text{proj}}$\\$4096\times14336$}}
& 1 & 0.489 & 0.339 & 0.253 & 0.186 & 1.93$\times$ & 1.82$\times$ \\
& 2 & 0.551 & 0.362 & 0.384 & 0.186 & 1.43$\times$ & 1.95$\times$ \\
& 4 & 0.976 & 0.365 & 0.745 & 0.186 & 1.31$\times$ & 1.96$\times$ \\
& 8 & 1.715 & 0.391 & 1.403 & 0.186 & 1.22$\times$ & 2.10$\times$ \\
\midrule
\multirow{4}{*}{\makecell{$\text{down}_{\text{proj}}$\\$14336\times4096$}}
& 1 & 0.372 & 0.392 & 0.277 & 0.276 & 1.34$\times$ & 1.42$\times$ \\
& 2 & 0.644 & 0.398 & 0.508 & 0.275 & 1.27$\times$ & 1.45$\times$ \\
& 4 & 1.012 & 0.410 & 0.809 & 0.275 & 1.25$\times$ & 1.49$\times$ \\
& 8 & 1.970 & 0.396 & 1.593 & 0.276 & 1.24$\times$ & 1.43$\times$ \\
\bottomrule
\end{tabular}
\caption{Latency comparison with/without kernel fusion on LLaMA3-8B in Environment 2.}
\label{tab:env2_latency}
\end{table}

\subsection{Offline Overhead}\label{offline_overhead}

\textbf{Training Overhead.} Table \ref{train_overhead} reports the end-to-end optimization cost on 2$\times$ NVIDIA RTX Pro 6000. As model size increases, the training time grows predictably (from 27 minutes on LLaMA3-3B to 389 minutes on Qwen3-32B), reflecting the higher compute and memory pressure of larger backbones. In contrast, the quantization stage remains consistently lightweight, taking only 8–38 minutes across all six models and accounting for a small fraction of the overall budget (typically around 10–30\%). This indicates that the proposed pipeline introduces modest additional cost beyond standard training, and the offline optimization can be amortized easily in deployment scenarios where models are served for long periods.
\begin{table*}[t]
\centering
\begin{tabular}{ccccccc}
\toprule
\textbf{Models} & \textbf{LLaMA3-3B} & \textbf{LLaMA3-8B} & \textbf{Qwen3-4B} & \textbf{Qwen3-8B} & \textbf{Qwen3-14B} & \textbf{Qwen3-32B} \\
\midrule
Training Time (mins) & 27 & 66 & 58 & 92 & 181 & 389 \\
\midrule
Quantization Time (mins) & 8 & 14 & 9 & 15 & 25 & 38 \\
\bottomrule
\end{tabular}
\caption{Optimization time for six models on 2$\times$ NVIDIA RTX Pro 6000.}
\label{train_overhead}
\end{table*}

\subsection{Implement Detail}\label{implement}
We implement the fused dual-component kernel in CUDA/C++ by extending CUTLASS INT4 Tensor Core GEMM templates, requiring roughly 2K lines of additional code. Our training pipeline is built on PyTorch with the HuggingFace Accelerate distributed stack; we further implement a custom trainer (about 4K lines) to support a three-stage schedule and stage-wise learning-rate scaling.
Moreover, we train for 1,000 steps in total, consisting of 400 steps of Global Alignment, 400 steps of Invertible Adaptation, and 200 steps of Joint Refinement. We use a constant learning-rate schedule with a base learning rate=5e\text{-}3. The effective learning rate is adjusted across stages by scaling the learning rate for the currently optimized parameter subset while freezing the others. We set weight decay=1e\text{-}3 globally, but disable weight decay for the rotation-related parameters. Moreover,
the smoothing factor $\boldsymbol{\lambda}\in\mathbb{R}^m$ is a per-channel vector whose $i$-th element is computed as $\lambda_i = \max(|\mathbf{X}_{:,i}|)^{\alpha}/\max(|\mathbf{W}_{i,:}|)^{1-\alpha}$ where $\mathbf{X}\in\mathbb{R}^{t\times m}$ and $\mathbf{W}\in\mathbb{R}^{m\times n}$ following SmoothQuant~\citep{quant_smoothquant}. 
The migration strength $\alpha$ is chosen offline, per layer, by searching for the value that minimizes the layer output mean squared error (MSE) after TwinQuant on the calibration dataset.

\end{document}